\documentclass{elsarticle}
%\journal{Mathematical Biosciences}

\usepackage{verbatim}
\usepackage[usenames]{color}
\usepackage{amssymb,amsfonts,amsmath,enumerate,latexsym,yhmath,amsthm}
\usepackage{url}
\usepackage{graphicx}

\usepackage{tikz}
\usetikzlibrary{snakes,shapes}
\tikzstyle{hyb}=[rectangle,fill=green!50,draw,minimum size=3mm]
\tikzstyle{tre}=[circle,fill=green!50,draw,minimum size=3.5mm]
\newcommand{\etq}[1]{%
\draw (#1) node {\footnotesize $#1$};
}

 \theoremstyle{plain}
 \newtheorem{theorem}{Theorem}
 \newtheorem{lemma}[theorem]{Lemma}
 
 \newtheorem{proposition}[theorem]{Proposition}

 \theoremstyle{definition}
 \newtheorem{example}[theorem]{Example}

\renewcommand{\leq}{\leqslant}
\renewcommand{\geq}{\geqslant}

\newcommand{\TT}{\mathcal{T}}

\newcommand{\RR}{\mathbb{R}}
\newcommand{\oF}{\overline\Phi}
\newcommand{\oFd}{\overline\Phi^{(2)}}

\begin{document}
\begin{frontmatter}

\title{The expected value of the squared euclidean cophenetic metric under the Yule and the uniform models}

\author{Gabriel Cardona}
\ead{gabriel.cardona@uib.es}
\author{Arnau Mir}
\ead{arnau.mir@uib.es}
\author{Francesc Rossell\'o\corref{cor1}}
\ead{cesc.rossello@uib.es}
\cortext[cor1]{Corresponding author}
\address{Department of Mathematics and  Computer Science,
  University of the Balearic Islands,
  E-07122 Palma de
  Mallorca, Spain}

\begin{abstract}
The cophenetic metrics $d_{\varphi,p}$, for $p\in \{0\}\cup[1,\infty[$, are a recent addition to the kit of available distances for the comparison of phylogenetic trees. Based on a fifty years old idea of Sokal and Rohlf, these metrics compare phylogenetic trees on a same set of taxa by encoding them by means of their  vectors of cophenetic values of pairs of taxa and depths of single taxa, and then computing the $L^p$ norm of the difference of the corresponding vectors.
In this paper we compute the expected value  of the square of $d_{\varphi,2}$  on the space of fully resolved rooted phylogenetic trees with $n$ leaves, under  the Yule and the uniform probability distributions.
 \end{abstract}

\begin{keyword}
Phylogenetic tree\sep Cophenetic metric\sep Uniform model\sep Yule model\sep Sackin index\sep Total cophenetic index
\end{keyword}
\end{frontmatter}

\section{Introduction}

The definition and study of metrics for the comparison of rooted phylogenetic trees  on the same set of taxa is a classical problem in phylogenetics  \cite[Ch.~30]{fel:04}, and many metrics have been introduced so far with this purpose. A recent addition to the set of metrics available in this context are the \emph{cophenetic metrics} $d_{\varphi,p}$ introduced in \cite{copheneticd1}. Based on a fifty years old idea of Sokal and Rohlf, these metrics compare phylogenetic trees on a same set of taxa by  first encoding the trees by means of their vectors of cophenetic values of pairs of taxa and depths of single taxa, and then computing the $L^p$ norm of the difference of the corresponding vectors.

Once the disimilarity between two phylogenetic trees has been computed through a given metric, it is convenient in many situations to assess its signifiance. One possibility is to compare the value obtained with its expected, or mean, value: is it much larger, much smaller, similar? \cite{steelpenny:sb93} This makes it necessary to study the distribution of the metric, or,  at least, to have a formula for the expected  value of the metric for any number $n$ of leaves. The distribution of several metrics has been studied so far: see, for instance, \cite{BS09,Yulenodal,Hendy,MirR10,steelpenny:sb93}.

The expected value of a distance depends on the probability distribution on the space of phylogenetic trees under consideration.
The most popular distribution on the space $\TT_n$ of binary phylogenetic trees with $n$ leaves is the uniform distribution, under which all trees in $\TT_n$ are equiprobable. But phylogeneticists consider also other probability distributions on $\TT_n$, defined through stochastic models of evolution \cite[Ch.~33]{fel:04}. The most popular is the so-called Yule model \cite{Harding71,Yule}, defined by an evolutionary process where, at each step, each currently extant species can give rise,  with the same probability, to two new species. Under this model, different phylogenetic trees with the same number of leaves may have different probabilities, which depend on their shape. 

In this paper we provide explicit formulas for the expected values  under the uniform and the Yule models of the square of the \emph{euclidean cophenetic metric} $d_{\varphi,2}$. 
The proofs of these formulas are based on long and tedious algebraic computations and thus, to ease the task of the reader interested only in the formulas and the path leading to them, but not in the details, we have moved these computations to an Appendix at the end of the paper.

Besides the aforemenentioned application of this value in the assessment of tree comparisons, the knowledge of formulas for the  expected value of $d_{\varphi,2}^2$ under different models may allow the use of $d_{\varphi,2}$ to test stochastic models  of tree growth, a popular line of research in the last years which so far has been mostly based on shape indices; see, for instance, \cite{BF06,Mooers97}. As a proof of concept, in \S 4 we report on a basic, preliminary such test performed on the binary phylogenetic  trees contained in the TreeBASE database \cite{treebase}.

\section{Preliminaries}

In this paper, by a  \emph{phylogenetic tree} on a set $S$ of taxa we mean a fully resolved, or binary, rooted tree  with its leaves bijectively labeled in $S$.  We understand such a rooted  tree as a directed graph, with its arcs pointing away from the root. To simplify the language, we shall always identify a leaf of a phylogenetic tree with its label.  We shall also use the term \emph{phylogenetic tree with $n$ leaves} to refer to a phylogenetic tree on the set $\{1,\ldots,n\}$.  We shall denote by $\TT(S)$ the space of all phylogenetic trees on $S$ and by  $\TT_n$ the space of all phylogenetic trees with $n$ leaves.

Let $T$ be a phylogenetic tree. If there exists a directed path from $u$ to $v$ in  $T$, we shall say that $v$ is a  \emph{descendant} of $u$ and also that $u$ is an  \emph{ancestor} of $v$. The \emph{lowest common ancestor} $\mathrm{LCA}_T(u,v)$ of a pair of nodes $u,v$ in   $T$ is the unique common ancestor of them that is a descendant of every other common ancestor of them.  The \emph{depth}  $\delta_T(v)$ of a node $v$ in $T$ is the distance (in number of arcs) from the root of $T$ to $v$. The  \emph{cophenetic value} $\varphi_T(i,j)$ of a  pair of  leaves $i,j$ in $T$   is the depth of their LCA. To simplify the notations, we shall often write $\varphi_T(i,i)$ to denote the depth $\delta_T(i)$ of a leaf $i$. 

Given two phylogenetic trees $T,T'$ on disjoint sets of taxa $S,S'$, respectively, we shall denote by $T\,\widehat{\ }\, T'$ the phylogenetic tree on $S\cup S'$ obtained by connecting the roots of $T$ and $T'$ to a (new) common root.
Every phylogenetic  tree $T\in \TT_n$ is obtained as $T_k\widehat{\ }\, {}T'_{n-k}$, for some $1\leq k\leq n-1$, some subset  $S_k\subseteq \{1,\ldots,n\}$ with $k$ elements, some tree $T_k$ on $S_k$ and some tree $T'_{n-k}$ on $S_k^c=\{1,\ldots,n\}\setminus S_k$. Actually, every phylogenetic  tree in $\TT_n$ is obtained in this way {twice}.

The \emph{Yule}, or  \emph{Equal-Rate Markov}, model of evolution \cite{Harding71,Yule} is a stochastic model of phylogenetic trees' growth. It starts with a node, and at every step a leaf is chosen randomly and uniformly and it is splitted into two leaves. Finally, the labels are assigned randomly and uniformly to the leaves once the desired number of leaves is reached. This corresponds to a model  of evolution where, at each step, each currently extant species can give rise,  with the same probability, to two new species.  Under this stochastic model, if $T\in \TT_n$ is a phylogenetic tree with set of internal nodes $V_{int}(T)$, and if for every  $v\in V_{int}(T)$ we denote by $\ell_T(v)$ the number of its descendant leaves, then the probability of  $T$ is   \cite{Brown,SM01}
 $$
 P_Y(T)=\frac{2^{n-1}}{n!}\prod_{v\in V_{int}(T)}\frac{1}{\ell_T(v)-1}.
$$

The  \emph{uniform}, or \emph{Proportional to Distinguishable Arrangements},  model \cite{Rosen78} is another stochastic model of phylogenetic trees' growth. Unlike the Yule model, its main feature is that all phylogenetic trees $T\in \TT_n$ have the same probability:
$$
P_U(T)=\frac{1}{(2n-3)!!}\mbox{, where } (2n-3)!!=(2n-3)(2n-5)\cdots 3\cdot 1.
$$
From the point of view of tree growth, this model is described as the process that starts with a node labeled 1 and then, at  the $k$-th step, a new pendant arc, ending in the leaf labeled $k+1$, is added either to a new root (whose other child will be, then, the original root) or to some edge, with all possible locations of this new pendant arc being equiprobable  \cite{CS,cherries}. Although this is not an explicit model of evolution, only of tree growth, several interpretations of it in terms of evolutionary processes have been given in the literature: see \cite[p. 686]{BF06} and the references therein.

\section{Main results}

Let $T\in \TT_n$ be a  phylogenetic tree with $n$ leaves.  The \emph{cophenetic vector} of $T$ is
$$
\varphi(T)=\big(\varphi_T(i,j)\big)_{1\leq i\leq j\leq n}\in \RR^{n(n+1)/2},
$$
with its elements lexicographically ordered in $(i,j)$. 
It turns out \cite{copheneticd1} that the mapping
$\varphi: \TT_n \to \RR^{n(n+1)/2}$
sending each $T\in \TT_n$ to its  cophenetic vector $\varphi(T)$, is  injective up to isomorphism. As it is well known, this allows to induce metrics on  $\TT_n$ from metrics defined on powers of $\RR$. In particular, 
in this paper we consider the cophenetic metric $d_{\varphi,2}$ on $\TT_n$ induced by the euclidean distance:
$$
d_{\varphi,2}(T_1,T_2)= \sqrt{\sum_{1\leq i\leq j\leq
n}(\varphi_{T_1}(i,j)-\varphi_{T_2}(i,j))^2}.
$$
To distinguish it from other cophenetic metrics obtained through other $L^p$ normes, we shall call it the \emph{euclidean cophenetic metric}.

\begin{example}
Consider the phylogenetic trees $T,T'\in \TT_4$ depicted in Fig. \ref{fig:1}.
Their total cophenetic vectors are
$$
\begin{array}{l}
\varphi(T)\hspace*{0.5ex}=(2,1,0,0,2,0,0,2,1,2)\\
\varphi(T')=(1,0,0,0,2,1,1,3,2,3)
\end{array}
$$
and therefore $d_{\varphi,2}(T,T')^2=7$. As we shall see below, the expected values of the square of $d_{\varphi,2}$ on $\TT_4$ under the uniform and the Yule models are, respectively, $10.56$ and $9.41$, and hence these two trees are quite more similar than average with respect to the euclidean cophenetic metric under both models.
\end{example}

\begin{center}
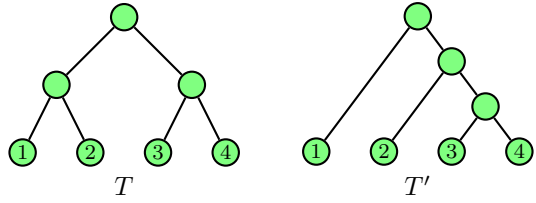
\begin{figure}[htb]
\begin{center}
\begin{tikzpicture}[thick,>=stealth,scale=0.3]
\draw(0,0) node [tre] (1) {};  \etq 1
\draw(3,0) node [tre] (2) {};  \etq 2
\draw(6,0) node [tre] (3) {};  \etq 3
\draw(9,0) node [tre] (4) {};  \etq 4  
\draw(1.5,3) node [tre] (a) {};   
\draw(7.5,3) node [tre] (b) {};   
\draw(4.5,6) node [tre] (r) {};   
\draw (r)--(a);
\draw (r)--(b);
\draw (a)--(1);
\draw (a)--(2);
\draw (b)--(3);
\draw (b)--(4);
\draw(4.5,-1.5) node {$T$};
\end{tikzpicture}
\qquad
\begin{tikzpicture}[thick,>=stealth,scale=0.3]
\draw(0,0) node [tre] (1) {};  \etq 1
\draw(3,0) node [tre] (2) {};  \etq 2
\draw(6,0) node [tre] (3) {};  \etq 3
\draw(9,0) node [tre] (4) {};  \etq 4  
\draw(7.5,2) node [tre] (b) {};   
\draw(6,4) node [tre] (a) {};   
\draw(4.5,6) node [tre] (r) {};   
\draw (r)--(a);
\draw (r)--(1);
\draw (a)--(2);
\draw (a)--(b);
\draw (b)--(3);
\draw (b)--(4);
\draw(4.5,-1.5) node {$T'$};
\end{tikzpicture}
\end{center}
\caption{\label{fig:1} 
Two phylogenetic trees with 4 leaves.}
\end{figure}
\end{center}

Let $D_n^2$ the random variable that chooses a pair of trees $T,T'\in \TT_n$ and computes $d_{\varphi,2}(T,T')^2$.
Its expected values under the Yule and the uniform models are given by the following two theorems. Recall that the $n$-th \emph{harmonic number} $H_n$ is defined as $H_n=\sum_{i=1}^n 1/i$.

\begin{theorem}\label{th:Yule}
For every $n\geq 2$, the expected value of $D_n^2$ under the Yule model is
$$
E_Y(D_n^2)= 
\frac{2n}{n-1}\big(3 n^2-10 n-1+8 (n+1) H_n-4 (n+1) H_n^2\big).
$$
\end{theorem}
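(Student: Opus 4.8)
The plan is to reduce the double average over pairs of trees to a sum of per-coordinate variances, and then to evaluate those variances through the recursive structure of the Yule model. First I would apply linearity of expectation to the definition of $d_{\varphi,2}$: if $T,T'\in\TT_n$ are drawn independently under the Yule model, then
$$
E_Y(D_n^2)=\sum_{1\leq i\leq j\leq n}E_Y\big[(\varphi_T(i,j)-\varphi_{T'}(i,j))^2\big].
$$
Because $T$ and $T'$ are independent and identically distributed, each summand equals $2E_Y[\varphi_T(i,j)^2]-2\big(E_Y[\varphi_T(i,j)]\big)^2=2\,\mathrm{Var}_Y(\varphi_T(i,j))$. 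Moreover, since the Yule model labels its leaves uniformly at random, $\varphi_T(i,j)$ has one and the same distribution for every diagonal coordinate $i=j$ (a leaf depth $\delta_T(i)$) and one and the same distribution for every off-diagonal coordinate $i<j$ (a pairwise cophenetic value). Hence
$$
E_Y(D_n^2)=2n\,\mathrm{Var}_Y(\delta_T(1))+n(n-1)\,\mathrm{Var}_Y(\varphi_T(1,2)).
$$

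Next I would express these two variances through the expectations of four tree statistics: the Sackin index $S_n=\sum_i\delta_T(i)$, the total cophenetic index $\Phi_n=\sum_{i<j}\varphi_T(i,j)$, and their squared companions $\widetilde S_n=\sum_i\delta_T(i)^2$ and $\widetilde\Phi_n=\sum_{i<j}\varphi_T(i,j)^2$. By label symmetry the single-coordinate moments are obtained by dividing $E_Y[S_n],E_Y[\widetilde S_n]$ by $n$ and $E_Y[\Phi_n],E_Y[\widetilde\Phi_n]$ by $\binom n2$, which turns the expression above into
$$
E_Y(D_n^2)=2E_Y[\widetilde S_n]+2E_Y[\widetilde\Phi_n]-\frac{2}{n}\big(E_Y[S_n]\big)^2-\frac{4}{n(n-1)}\big(E_Y[\Phi_n]\big)^2.
$$
So everything comes down to four expectations.

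I would compute each of them from the recursive decomposition $T=T_k\,\widehat{\ }\,T'_{n-k}$, using the standard fact that under the Yule model the number of leaves in one of the two root subtrees is uniform on $\{1,\dots,n-1\}$ (as follows from the formula for $P_Y$), while the two subtrees are themselves independent Yule trees. When the new root is grafted above them, every leaf depth grows by $1$, the cophenetic value of a pair lying inside one subtree grows by $1$, and the cophenetic value of a pair split between the two subtrees becomes $0$. Expanding $(\delta_T+1)^2$ and $(\varphi_T+1)^2$ and translating these rules into the four statistics gives, for each statistic $X_n$, a recursion of the shape $E_Y[X_n]=\frac{2}{n-1}\sum_{k=1}^{n-1}E_Y[X_k]+c_n$, where for $\widetilde S_n$ and $\widetilde\Phi_n$ the inhomogeneous term $c_n$ itself involves the already-solved first moments (together with $\sum_k\binom k2$). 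The two first moments are classical and solve to $E_Y[S_n]=2n(H_n-1)$ and $E_Y[\Phi_n]=n(n+1)-2nH_n$; solving the two second-moment recursions the same way produces closed forms containing sums such as $\sum_k H_k$, $\sum_k kH_k$ and $\sum_k H_k^2$.

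The last step is to substitute the four closed forms into the displayed formula for $E_Y(D_n^2)$ and simplify. I expect this final simplification to be the main obstacle: the two second-moment recurrences generate a tangle of polynomial, $H_n$ and $H_n^2$ contributions, and collapsing their combination into the compact target $\frac{2n}{n-1}\big(3n^2-10n-1+8(n+1)H_n-4(n+1)H_n^2\big)$ is exactly the kind of long harmonic-number algebra the paper defers to the Appendix. As sanity checks the formula should return $0$ for $n=2$, where $\TT_2$ has a single tree up to isomorphism, and $8/3$ for $n=3$, which one can confirm by hand from the three labelled caterpillars.
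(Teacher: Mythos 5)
Your proposal is correct and follows essentially the same route as the paper: your opening variance reduction is precisely Proposition \ref{prop:E}, and your root-split recursion under the Yule model (uniform split size, independent Yule subtrees, every within-subtree coordinate shifted by $1$, cross pairs becoming $0$) is exactly how the Appendix proves Proposition \ref{prop:D^2}(a). The only divergence is bookkeeping: the paper runs a single recursion for $\oFd_n=\widetilde S_n+\widetilde\Phi_n$, exploiting that $E_Y(S_n)+E_Y(\Phi_n)=n(n-1)$ is free of harmonic numbers so that the telescoped solution needs only $\sum_{k\leq n}(5-8/k)$, whereas your two separate recursions for $\widetilde S_n$ and $\widetilde\Phi_n$ yield intermediate closed forms containing both $H_n^2$ and $H_n^{(2)}=\sum_{k=1}^n 1/k^2$, which cancel only when the two are added.
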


\begin{theorem}\label{th:unif}
For every $n\geq 2$, the expected value of $D_n^2$ under the uniform model is
$$
E_U(D_n^2) =
\frac{1}{3}(4n^3+18n^2-10n)-\frac{n(n+3)}{2}\cdot\frac{(2n-2)!!}{(2n-3)!!}-\frac{n(n+7)}{4}\left(\frac{(2n-2)!!}{(2n-3)!!}\right)^2
$$
\end{theorem}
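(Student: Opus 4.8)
The plan is to expand the square of the Euclidean distance and exploit linearity of expectation. Writing
$$
d_{\varphi,2}(T,T')^2=\sum_{1\le i\le j\le n}\big(\varphi_T(i,j)-\varphi_{T'}(i,j)\big)^2,
$$
and recalling that $D_n^2$ chooses $T$ and $T'$ independently from the same distribution, I would take expectations term by term. Since $T$ and $T'$ are i.i.d., for each fixed pair $(i,j)$ one has $E\big((\fT-\fTp)^2\big)=2\operatorname{Var}(\fT(i,j))=2\big(E(\fT(i,j)^2)-E(\fT(i,j))^2\big)$. Thus the whole problem reduces to computing, under each model, the first and second moments of the cophenetic values $\fT(i,j)$ and of the leaf depths $\fT(i,i)=\delta_T(i)$, summed over all pairs. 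So the target becomes
$$
E(D_n^2)=2\sum_{1\le i\le j\le n}\operatorname{Var}\big(\fT(i,j)\big),
$$
which I would split into the diagonal sum over depths $\delta_T(i)$ and the off-diagonal sum over genuine cophenetic values $\fT(i,j)$ with $i<j$.

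To evaluate these moment sums I would use the symmetry of both models under relabeling of leaves: the distribution of $\delta_T(i)$ does not depend on $i$, and the joint distribution of $(\fT(i,j))$ for $i<j$ is exchangeable, so every off-diagonal term has the same expectation. Hence the sums collapse to $n\cdot E(\delta_T(1)^2)+\binom{n}{2}\,2\operatorname{Var}(\fT(1,2))$-type expressions, and everything is driven by a handful of scalar quantities: $E(\delta)$, $E(\delta^2)$, $E(\fT(1,2))$, $E(\fT(1,2)^2)$. These are precisely the second moments underlying the Sackin index (sum of leaf depths) and the total cophenetic index (sum of pairwise cophenetic values), whose expectations and variances are classical; I would either invoke or rederive them for each model. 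For the Yule model the natural tool is the recursion coming from the decomposition $T=T_k\,\widehat{\ }\,T'_{n-k}$ together with the known probability that a random split has size $k$ (namely proportional to $1/(k(n-k))$ up to normalization), which produces recurrences in $n$ whose solutions involve the harmonic numbers $H_n$ and $H_n^2$ appearing in the statement. For the uniform model I would set up the analogous recurrences using the $(2n-3)!!$ weighting and the insertion/growth description, which naturally yields the double-factorial ratios $(2n-2)!!/(2n-3)!!$.

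Concretely, the key steps in order are: (1) reduce $E(D_n^2)$ to $2\sum\operatorname{Var}(\fT(i,j))$ via independence; (2) use exchangeability to reduce to the four scalar moments of a single depth and a single cophenetic value; (3) derive recurrences for these moments under the Yule model from the $\widehat{\ }$-decomposition and solve them to get expressions in $H_n,H_n^2$; (4) do the same under the uniform model to get expressions in the double-factorial ratios; (5) assemble and simplify to match the two closed forms. The main obstacle I anticipate is step (3)/(4): the second moments $E(\fT(i,j)^2)$ couple the depth of a node to the cophenetic structure below it, so the recurrences are not self-contained but involve cross-moments between depths and cophenetic values (e.g. $E(\delta_T(i)\cdot\fT(i,j))$ and joint moments of $\fT$ on triples of leaves). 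Tracking this whole system of coupled moment recurrences and solving it in closed form is exactly where the ``long and tedious algebraic computations'' promised in the introduction will live, and I would expect to organize that bookkeeping carefully (likely splitting by whether the leaves land in the same side of the root split or on opposite sides) before invoking standard summation identities for harmonic numbers and double factorials to finish.
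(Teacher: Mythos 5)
Your steps (1)--(2) are exactly the paper's Proposition~\ref{prop:E}: expanding the square, using independence of $T$ and $T'$ to get $E(D_n^2)=2\sum_{1\le i\le j\le n}\operatorname{Var}(\fT(i,j))$, and then using invariance under relabelings to collapse everything onto the moments of $\delta_T(1)$ and $\fT(1,2)$, i.e.\ onto $E(S_n)$, $E(\Phi_n)$ and $E(\oFd_n)$. Where you genuinely diverge is the hard step, the second moments under the uniform model. The paper does \emph{not} use recurrences there: it counts exactly, writing $E_U(\oFd_n)$ in terms of $\sum_k k^2 d_{k,n}$ and $\sum_k k^2 f_{k,n}$, where $d_{k,n}=|\{T\in\TT_n:\delta_T(1)=k\}|$ and $f_{k,n}=|\{T\in\TT_n:\fT(1,2)=k\}|$, derives a closed hypergeometric (${}_3F_2$) formula for $f_{k,n}$ from the structure of the path from the root to $\mathrm{LCA}_T(1,2)$, and evaluates the sums with hypergeometric identities and Gosper's algorithm. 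Your alternative, moment recurrences from the leaf-insertion description of the uniform model, does work: when leaf $m+1$ is inserted into one of the $2m-1$ available positions, $\delta_T(1)$ (resp.\ $\fT(1,2)$) increases by $1$ precisely when the insertion is on the root-to-leaf (resp.\ root-to-LCA) path or above the root, i.e.\ with probability $(\delta_T(1)+1)/(2m-1)$ (resp.\ $(\fT(1,2)+1)/(2m-1)$), which gives self-contained first-order linear recurrences for $E(\delta_T(1))$, $E(\delta_T(1)^2)$, $E(\fT(1,2))$, $E(\fT(1,2)^2)$ whose solutions produce the $(2n-2)!!/(2n-3)!!$ ratios. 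That route is more elementary than the paper's (no hypergeometric machinery), at the price of yielding only moments, whereas the paper's $d_{k,n}$ and $f_{k,n}$ describe the full distributions.

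Two corrections, though. First, the obstacle you anticipate is not there: because $T$ and $T'$ are independent, only per-coordinate variances ever enter, and the per-coordinate recurrences are autonomous --- under insertion (or under conditioning on the root split), the increment of $\fT(1,2)$ depends only on its current value, since $\mathrm{LCA}_T(1,2)$ is a fixed node whose depth changes by at most one. No cross-moments $E(\delta_T(1)\fT(1,2))$ and no joint moments on triples of leaves are needed; your own step (2) already establishes this, and chasing them would only add useless bookkeeping. Second, your parenthetical claim about the Yule model is wrong: under Yule the root split size is uniform on $\{1,\dots,n-1\}$, the probability of a given labeled split $(S_k,S_k^c)$ being $\frac{2}{(n-1)\binom{n}{k}}$ (the paper's Lemma~\ref{lem:pangle}); a split distribution proportional to $1/(k(n-k))$ belongs to a different (Aldous-type) model. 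This does not affect the uniform-model theorem at hand, but it would derail the companion Yule computation if carried out as stated.
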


Since $H_n\sim \ln(n)$ and $(2n-2)!!/(2n-3)!!\sim \sqrt{\pi n}$, these formulas imply that
$$
E_Y(D_n^2)\sim 6n^2,\quad E_U(D_n^2)\sim \Big(\frac{4}{3}-\frac{\pi}{4}\Big)n^3.
$$

We shall prove the  formulas in Theorems \ref{th:Yule} and \ref{th:unif} by reducing the computation of the expected value of $D_n^2$ to that of the following random variables:
\begin{itemize}
\item $S_n$, the random variable that chooses a tree $T\in \TT_n$ and computes  its Sackin index $S$ \cite{Sackin:72}, defined by
$$
S(T)=\sum_{i=1}^n \delta_T(i)
$$

\item $\Phi_n$, the random variable that chooses a tree $T\in \TT_n$ and computes its total cophenetic index $\Phi$  \cite{MRR}, defined by 
$$
\Phi(T)=\hspace*{-2ex}\sum_{1\leq i<j\leq n} \varphi_T(i,j)
$$

\item $\oF_n^{(2)}$, the random variable that chooses a tree $T\in \TT_n$ and computes
$$
\oF^{(2)}(T)=\sum\limits_{1\leq i\leq j\leq n}\varphi_T(i,j)^2
$$
\end{itemize}
For the models under consideration, the expected values of these variables are related to that of $D_n^2$ by the next proposition.
In it and henceforth, we shall denote by $E(X)$ the expected value of a random variable $X$ on $\TT_n$ under a generic probability distribution
 $p:\TT_n\to [0,1]$ on $\TT_n$ invariant under relabelings. The probability distributions $p_Y$ and $p_U$ defined by the Yule and the uniform models, respectively, are invariant under relabelings, and therefore the expected values under these specific models, which will be denoted by $E_Y$ and $E_U$, respectively, are  special cases of $E$.
 
\begin{proposition}\label{prop:E}
$E(D_n^2)=2E(\oFd_n)-2\cdot \dfrac{E(S_n)^2}{n}- 4\cdot\dfrac{E(\Phi_n)^2}{n(n-1)}.$
\end{proposition}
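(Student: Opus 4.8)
The plan is to expand the squared euclidean distance, take expectations termwise, and exploit that the two trees are chosen independently and with the same distribution. Writing out the definition of the metric,
$$
d_{\varphi,2}(T,T')^2=\sum_{1\leq i\leq j\leq n}\big(\varphi_T(i,j)-\varphi_{T'}(i,j)\big)^2 =\sum_{1\leq i\leq j\leq n}\varphi_T(i,j)^2-2\varphi_T(i,j)\varphi_{T'}(i,j)+\varphi_{T'}(i,j)^2 .
$$
Since $D_n^2$ picks $T$ and $T'$ independently according to the same distribution $p$, linearity of expectation makes the two pure-square sums each contribute $E(\oFd_n)$, while independence lets me factor the mixed term as $E[\varphi_T(i,j)]\,E[\varphi_{T'}(i,j)]=E[\varphi_T(i,j)]^2$, the last equality because $T$ and $T'$ are identically distributed. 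Hence
$$
E(D_n^2)=2E(\oFd_n)-2\sum_{1\leq i\leq j\leq n}E[\varphi_T(i,j)]^2 .
$$

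The remaining work is to evaluate this sum of squared means, and here the invariance of $p$ under relabelings is essential. I would split the index set into the diagonal pairs $i=j$ and the strict pairs $i<j$. Because $p$ is relabeling-invariant, the quantity $E[\varphi_T(i,i)]=E[\delta_T(i)]$ takes one common value over all leaves $i$, and $E[\varphi_T(i,j)]$ takes one common value over all pairs $i<j$. Summing depths gives $E(S_n)=\sum_i E[\delta_T(i)]$, so the common leaf-depth expectation equals $E(S_n)/n$; summing cophenetic values of distinct pairs gives $E(\Phi_n)=\sum_{i<j}E[\varphi_T(i,j)]$, so the common pair expectation equals $2E(\Phi_n)/(n(n-1))$.

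Substituting these constants back, the diagonal block becomes $n\cdot(E(S_n)/n)^2=E(S_n)^2/n$ and the off-diagonal block becomes $\binom{n}{2}\cdot(2E(\Phi_n)/(n(n-1)))^2=2E(\Phi_n)^2/(n(n-1))$; plugging both into the displayed expression for $E(D_n^2)$ yields exactly the claimed identity. The only genuine subtlety is the step that converts a sum of squared expectations into a square of a sum, which is legitimate precisely because relabeling invariance forces all the relevant expectations to be equal, so that for $m$ equal terms the sum of their squares equals the square of their sum divided by $m$. No hard computation is involved: once this symmetry observation is in place, the proof is a short rearrangement.
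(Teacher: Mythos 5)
Your proof is correct and takes essentially the same route as the paper's: expand the square, use independence of $T$ and $T'$ to factor the cross term into a product of expectations, and then use relabeling invariance to equate the per-leaf and per-pair expectations with $E(S_n)/n$ and $E(\Phi_n)/\binom{n}{2}$ respectively. The paper merely phrases this with explicit sums over pairs of trees and auxiliary random variables $\delta_n$, $\varphi_n$; the content is identical.
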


\begin{proof}
To simplify the notations, let
\begin{itemize}
\item $\varphi_n$ be the random variable that chooses a tree $T\in \TT_n$ and computes  $\varphi_T(1,2)$.

\item $\delta_n$ be the random variable that chooses a tree $T\in \TT_n$ and computes  $\delta_T(1)$.
\end{itemize}
Let us compute now $E(D_n^2)$ from its very definition:
$$
\begin{array}{l}
\displaystyle E(D_n^2)=\sum_{(T,T')\in \TT_n^2} d_{\varphi,2}(T,T')^2p(T)p(T')\\
\quad \displaystyle =\sum_{(T,T')\in \TT_n^2} \Big(\sum_{1\leq i\leq j\leq n} (\varphi_T(i,j)-\varphi_{T'}(i,j))^2\Big)p(T)p(T')\\
\quad \displaystyle 
=\sum_{1\leq i\leq j\leq n}\sum_{(T,T')\in \TT_n^2} (\varphi_T(i,j)^2+\varphi_{T'}(i,j)^2-2\varphi_T(i,j)\varphi_{T'}(i,j))p(T)p(T')
\\
\quad \displaystyle 
=\sum_{1\leq i\leq j\leq n}\hspace*{-1ex}\Big(\sum_{(T,T')\in \TT_n^2}\varphi_T(i,j)^2 p(T)p(T')+\hspace*{-2ex}\sum_{(T,T')\in \TT_n^2} \varphi_{T'}(i,j)^2p(T)p(T')\\
\qquad\qquad \displaystyle 
-2\hspace*{-2ex}\sum_{(T,T')\in \TT_n^2} \varphi_T(i,j)\varphi_{T'}(i,j)p(T)p(T')\Big)\\
\end{array}
$$
$$
\begin{array}{l}
\quad \displaystyle 
=\sum_{1\leq i\leq j\leq n}\Big(\sum_{T\in \TT_n}\varphi_T(i,j)^2 p(T)+\sum_{T'\in \TT_n} \varphi_{T'}(i,j)^2p(T')\\
\qquad\qquad \displaystyle-2\Big(\sum_{T\in \TT_n}\varphi_T(i,j) p(T)\Big)\Big(\sum_{T'\in \TT_n} \varphi_{T'}(i,j) p(T')\Big)\Big)\\
\quad \displaystyle 
=\sum_{1\leq i\leq j\leq n}\Big(2\sum_{T\in \TT_n}\varphi_T(i,j)^2 p(T)
-2\Big(\sum_{T\in \TT_n}\varphi_T(i,j) p(T)\Big)^2\Big)\\
\quad \displaystyle 
=2\sum_{T\in \TT_n} \Big(\sum_{1\leq i\leq j\leq n} \varphi_T(i,j)^2\Big) p(T)-
2\sum_{1\leq i< j\leq n}
\Big(\sum_{T\in \TT_n}\varphi_T(i,j) p(T)\Big)^2\\
\qquad\qquad \displaystyle-2\sum_{1\leq i\leq n}
\Big(\sum_{T\in \TT_n}\varphi_T(i,i) p(T)\Big)^2\\
\quad \displaystyle 
=2\sum_{T\in \TT_n} \oFd(T) p(T)-
2 \binom{n}{2}\Big(\sum_{T\in \TT_n}{\varphi}_T(1,2) p(T)\Big)^2\\
\qquad\qquad \displaystyle-2
n\Big(\sum_{T\in \TT_n}\delta_T(1) p(T)\Big)^2\\
\quad \displaystyle =2E(\oFd_n)-
n(n-1) E(\varphi_n)^2 -2nE(\delta_n)^2
\end{array}
$$
Now, the values of $E(\delta_n)$ and $E(\varphi_n)$ can be easily obtained from $E(S_n)$ and $E(\Phi_n)$, respectively,
using the invariance under relabelings of the probability distribution under which we compute the expected values $E$:
$$
\textstyle E(\delta_n)={E(S_n)}/{n}, \qquad E(\varphi_n)={E(\Phi_n)}/{\binom{n}{2}}
$$
The formula in the statement is then obtained by replacing $E(\delta_n)$ and $E(\varphi_n)$ by these values.
\end{proof}

The expected values of $S_n$ and $\Phi_n$ under the Yule and the uniform models are known:
$$
\begin{array}{ll}
\displaystyle E_Y(S_n)=2n(H_n-1) &\qquad \displaystyle  E_U(S_n)=n\Big(\frac{(2n-2)!!}{(2n-3)!!}-1\Big) \\[2ex]
\displaystyle 
 E_Y(\Phi_n)=n(n-1)-2n(H_n-1) &\qquad
\displaystyle E_U(\Phi_n)=\frac{1}{2} \binom{n}{2} \Big(\frac{(2n-2)!!}{(2n-3)!!}-2 \Big) 
\end{array}
$$
The formula for $E_Y(S_n)$ was proved in \cite{Heard92} and the other three, in \cite{MRR}.

To obtain the expected values of $D_n^2$, it remains to compute the expected values of $\oF_n^{(2)}$. They are given by the following result.

\begin{proposition}\label{prop:D^2}
For every $n\geq 2$,
\begin{enumerate}[(a)]
\item $E_Y(\oF_n^{(2)})=5n(n-1)-8n(H_n-1)$

\item $\displaystyle E_U(\oF_n^{(2)}) = \frac{1}{6} n (4 n^2+21 n-7)-\frac{3}{4}n(n+3)\frac{(2n-2)!!}{(2n-3)!!}$
\end{enumerate}
\end{proposition}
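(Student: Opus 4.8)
The plan is to exploit the recursive structure of phylogenetic trees under the operation $\widehat{\ }$. The first step is to record how $\oFd$ transforms under this operation. If $T=T_k\,\widehat{\ }\,T'_{n-k}$, then every leaf of $T_k$ and every within-$T_k$ pair lies one level deeper in $T$ than in $T_k$, whereas every cross pair has the new root as its LCA and hence cophenetic value $0$. Expanding $(1+\delta_{T_k}(i))^2$ over the leaves and $(1+\varphi_{T_k}(i,j))^2$ over the within-subtree pairs (and symmetrically for $T'_{n-k}$) gives
$$
\oFd(T_k\,\widehat{\ }\,T'_{n-k})=\oFd(T_k)+\oFd(T'_{n-k})+2S(T_k)+2S(T'_{n-k})+2\Phi(T_k)+2\Phi(T'_{n-k})+n+\binom{k}{2}+\binom{n-k}{2}.
$$
Thus $\oFd$ decomposes additively, its inhomogeneous part being built only from the Sackin and total cophenetic indices of the two subtrees, whose expectations are already recalled above.

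Second, I would take expectations. Conditioning on the root split, the two subtrees are, under both models, conditionally independent and distributed according to the same model on their leaf sets; so after averaging over the split size I may replace $\oFd(T_k)$, $S(T_k)$, $\Phi(T_k)$ by $E(\oFd_k)$, $E(S_k)$, $E(\Phi_k)$. Using the remark of Section 2 that each tree in $\TT_n$ arises exactly twice as $T_k\,\widehat{\ }\,T'_{n-k}$, I sum over these ordered decompositions: under the Yule model the left part has size $k$ with probability $1/(n-1)$ for every $k=1,\dots,n-1$, while under the uniform model the corresponding weight is proportional to $\binom{n}{k}(2k-3)!!(2(n-k)-3)!!$. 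This converts the identity above into a weighted full-history recurrence $E(\oFd_n)=\sum_{k<n}\rho_{n,k}E(\oFd_k)+g_n$ with an explicit inhomogeneous term $g_n$.

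Third, I would solve the recurrence for each model. A pleasant simplification occurs in $g_n$: the combination $E(S_k)+E(\Phi_k)$ collapses, and for Yule it equals exactly $k(k-1)$, so all harmonic contributions cancel and $g_n$ becomes a pure polynomial (in fact $g_n=n(5n-7)/3$). The full-history recurrence is then reduced to a first-order linear one by subtracting consecutive instances of $(n-1)E(\oFd_n)=2\sum_{k<n}E(\oFd_k)+(n-1)g_n$; for Yule this yields $E(\oFd_n)=\tfrac{n}{n-1}E(\oFd_{n-1})+g_n-\tfrac{n-2}{n-1}g_{n-1}$, and dividing by $n$ and telescoping reintroduces a single harmonic number, giving formula (a). The uniform case proceeds along the same lines but with the double-factorial weights $\rho_{n,k}$.

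The hard part will be the uniform model. There the inhomogeneous term still carries the ratios $(2k-2)!!/(2k-3)!!$ inherited from $E_U(S_k)$ and $E_U(\Phi_k)$, and these must be summed against the double-factorial split weights; the reduction from the full-history recurrence to a first-order one produces sums and recurrences in the quantities $(2n-3)!!$ and $(2n-2)!!/(2n-3)!!$ whose closed-form evaluation is precisely the long and tedious computation referred to in the introduction. Both formulas can finally be confirmed by checking the base cases $n=2$ and $n=3$ (where $\oFd$ equals $2$ and $10$, respectively) and then verifying by induction that the claimed closed forms satisfy the recurrences established above.
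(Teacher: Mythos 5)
Your part (a) is correct and follows essentially the same route as the paper. Your decomposition identity agrees with the paper's (unnumbered) lemma on $\oFd(T_k\,\widehat{\ }\,T'_{n-k})$, since your constant $n+\binom{k}{2}+\binom{n-k}{2}$ equals the paper's $\binom{k+1}{2}+\binom{n-k+1}{2}$; the conditional independence of the two subtrees given the root split under Yule is exactly Lemma~\ref{lem:pangle}; your inhomogeneous term $g_n=n(5n-7)/3$ is right (using $E_Y(\oF_k)=k(k-1)$), and your subtraction does give $g_n-\frac{n-2}{n-1}g_{n-1}=5n-8$, reproducing the paper's first-order recurrence $E_Y(\oFd_n)=\frac{n}{n-1}E_Y(\oFd_{n-1})+5n-8$ and, after dividing by $n$ and telescoping, formula (a).

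Part (b), however, has a genuine gap. The reduction from the full-history recurrence to a first-order one works for Yule only because the split-size weight $\rho_{n,k}=2/(n-1)$ does not depend on $k$: that is what makes $(n-1)E_Y(\oFd_n)-(n-2)E_Y(\oFd_{n-1})$ collapse the history. Under the uniform model the weight $\rho_{n,k}\propto\binom{n}{k}(2k-3)!!\,(2n-2k-3)!!$ does depend on $k$, so no such subtraction telescopes, and ``proceeds along the same lines'' is not available. Your fallback --- check base cases and verify by induction that the closed form satisfies the recurrence --- does not escape the problem either: plugging the claimed formulas into the full-history recurrence forces you to evaluate convolutions such as
$$
\sum_{k=1}^{n-1}\binom{n}{k}(2k-3)!!\,(2n-2k-3)!!\;E_U(\oFd_k)
\quad\text{and}\quad
\sum_{k=1}^{n-1}\binom{n}{k}(2k-3)!!\,(2n-2k-3)!!\;E_U(\oF_k),
$$
whose summands carry the ratio $(2k-2)!!/(2k-3)!!$; producing closed forms for these sums \emph{is} the hard step, and you supply no tool for it. This is precisely why the paper abandons the recursive decomposition in the uniform case and argues completely differently: it writes $E_U(\oFd_n)$ in terms of the counts $d_{k,n}$ and $f_{k,n}$ of trees with prescribed leaf depth and cophenetic value (Lemma~\ref{lem:9}), derives a hypergeometric expression for $f_{k,n}$ (Lemma~\ref{lem:f}), and evaluates $\sum_k k^2 d_{k,n}$ and $\sum_k k^2 f_{k,n}$ via ${}_2F_1$/${}_3F_2$ identities and Gosper's algorithm (Lemmas~\ref{lem:U}, \ref{sum:d}, \ref{sum:f}). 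Your convolution recurrence could in principle be completed by a different tool, namely exponential generating functions: with $y(x)=\sum_{n\geq 1}(2n-3)!!\,x^n/n!=1-\sqrt{1-2x}$, which satisfies $y=x+y^2/2$, the binomial--double-factorial convolution becomes a product of series and the recurrence becomes an algebraic equation for the EGF of $(2n-3)!!\,E_U(\oFd_n)$. But some such machinery must actually be brought in and carried through; as written, part (b) is a plan, not a proof.
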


This proposition is proved in the Appendix at the end of this paper. Finally, the identities given in
Theorems  \ref{th:Yule} and  \ref{th:unif}  are obtained by replacing, in the identity given in Proposition \ref{prop:E},
$E(S_n)$, $E(\Phi_n)$, and $E(\oFd_n)$ by their values under the Yule and the uniform models, respectively.
We leave the last details to the reader.

\section{An experiment on TreeBASE}

In this section we report on a very simple experiment to show how $d_{\varphi,2}$ can be used to test evolutionary hypotheses. In this experiment, we have compared the expected value of $d^2_{\varphi,2}$ on $\TT_n$ under the uniform and the Yule models with its average value on the set TreeBASE$_{bin,n}$ of binary phylogenetic trees with $n$ leaves contained in TreeBASE \cite{treebase}.

To perform this experiment, we have taken some decisions. First, since there are only very few values $n> 50$ such that $|\mbox{TreeBASE}_{bin,n}|>10$, we have decided to consider only those binary trees contained in TreeBASE with $n\leq 50$ leaves. On the other hand, even for those $n$ such that $\mbox{TreeBASE}_{bin,n}$ is relatively large,  in most cases it does not contain many pairs of trees  with the same taxa. So, instead of computing the average value of $d^2_{\varphi,2}$ on $\mbox{TreeBASE}_{bin,n}$ by averaging the values $d^2_{\varphi,2}(T,T')$ for pairs of trees $T,T'$ with exactly the same $n$ taxa,
we have made use of the formula given in Proposition \ref{prop:E},
as if $\mbox{TreeBASE}_{bin,n}$ was closed under relabelings: that is, we have taken only into account the shapes of the trees contained in it. 
This is consistent with the fact that our final goal is to test models of evolution that produce tree shapes.

So, we have computed the average values of $\oFd$, of the Sackin index $S$, and of the total cophenetic index $\Phi$ on $\mbox{TreeBASE}_{bin,n}$, and we have taken as average value of $d^2_{\varphi,2}$ on this set the result of appying the formula  in Proposition \ref{prop:E}.  The detailed results of these computations, as well as the Python and R scripts used to compute and analyze them, are available in the Supplementary Material web page  \url{http://bioinfo.uib.es/~recerca/phylotrees/expectedcophdist/}.

Fig. 2 plots the log of these average values as a function of  $\log(n)$. We have added the curves of the log of the expected values of $D_n^2$ under the Yule distribution (lower, dotted curve) and under the uniform distribution (upper, dashed curve), again as a function of  $\log(n)$. The graphic shows that the expected value of $d^2_{\varphi,2}$ on (the shapes of) the phylogenetic trees contained in TreeBASE is better explained by the uniform model than by the Yule  model.
This agrees with the results of similar experiments using other measures (see, for instance, \cite{BF06,MRR}). 

\begin{figure}[htb]
\centering
\includegraphics[height=6cm]{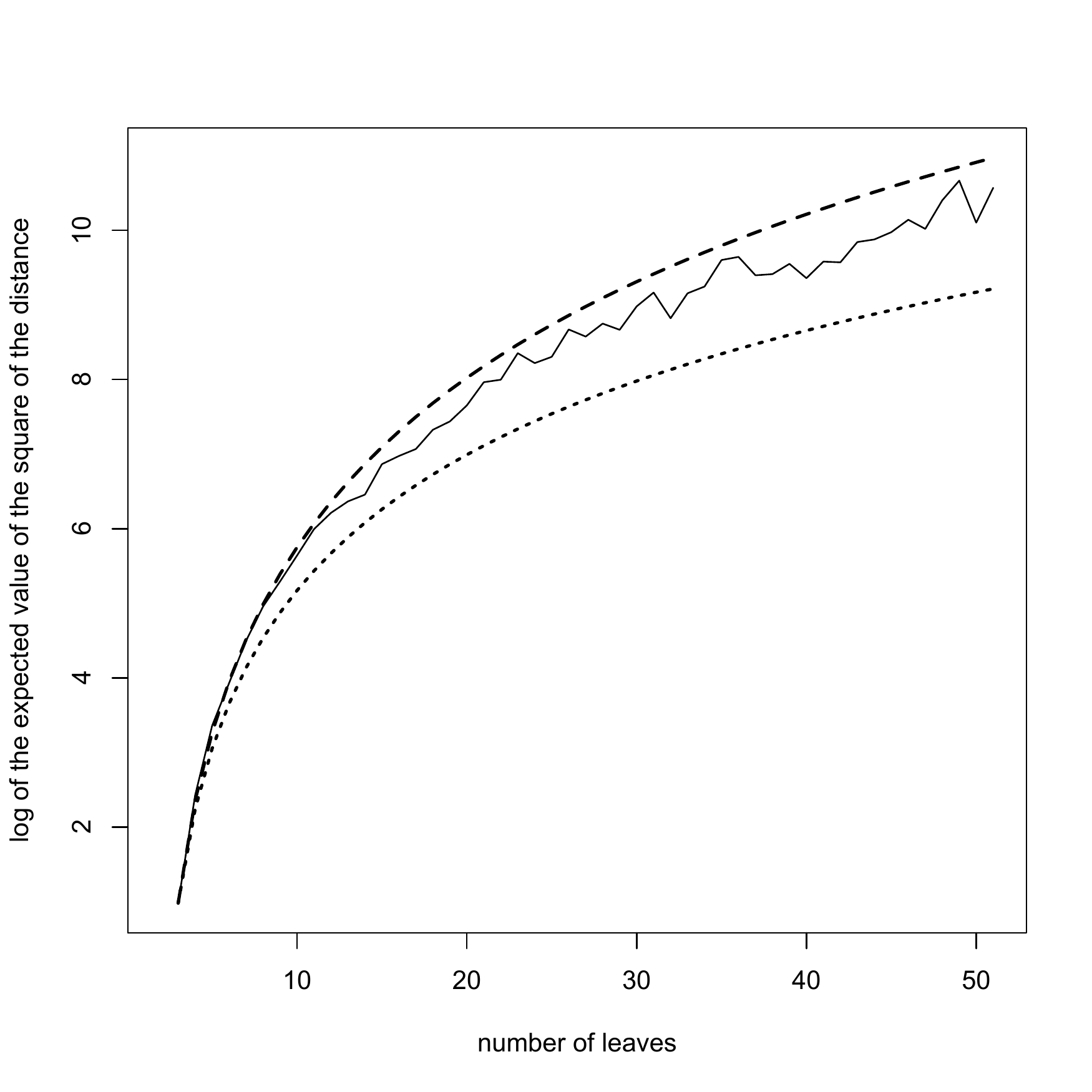}
\label{means}%
\caption{Log-log plots of the mean of $D_n^2$ for the binary trees in TreeBASE with a fixed number  $n$ of leaves,  of $E_Y(D_n^2)$ (dotted curve) and $E_U(D_n^2)$ (dashed curve).}
\end{figure}%

\section{Conclusions and discussion}

In this paper we have obtained formulas for the expected values under the Yule and the uniform models of the square of the euclidean cophenetic metric $d_{\varphi,2}$, defined by the euclidean distance between cophenetic vectors.  These formulas are explicit  and hold on spaces $\TT_n$ of fully resolved phylogenetic trees with any number $n$ of leaves.

These formulas have been obtained through long algebraic manipulations of sums of sequences. To double-check our results, we have computed the exact value of $E_Y(D_n^2)$ and $E_U(D_n^2)$ for $n=3,\dots,7$, by generating all trees with up to $7$ leaves. Moreover, we have computed numerical approximations to these values  for $n=10,20,\dots,100$, by generating pairs of random trees until the numerical method stabilizes. These numerical experiments confirm that our formulas give the right figures. Table \ref{table:1} gives the exact values for  $n=3,\dots,7$.  The results of the simulations for $n=10,20,\dots,100$, as well as
the Python scripts used in these computations,  are also available in the aforementioned Supplementary Material web page  \url{http://bioinfo.uib.es/~recerca/phylotrees/expectedcophdist/}.

\begin{center}
\begin{table}[htb]
\begin{center}
\begin{tabular}{r|ccccc}
& 3& 4 & 5 & 6 & 7 \\
\hline
$E_Y(D_n^2)$ &
2.66667 & 9.40741 & 21.1833 & 38.712 & 62.5562\\ 
$E_U(D_n^2)$ & 2.66667 & 10.56 & 26.2367 & 52.3023 & 91.4086
\end{tabular}
\end{center}
\caption{\label{table:1}Values of $E_Y(D_n^2)$ and $E_U(D_n^2)$ for $n=3,\ldots,7$. They agree with those given by our formulas.}
\end{table}
\end{center}

The  formulas for $E_Y(D_n^2)$ and $E_U(D_n^2)$ grow in different orders: $E_Y(D_n^2)$ is in $\Theta(n^2)$, while $E_U(D_n^2)$ is in $\Theta(n^3)$. Therefore, they can be used to test the Yule and the uniform models as null stochastic models of evolution for collections of phylogenetic trees reconstructed by different methods. 
We have reported on a first experiment of this type, which reinforces the conclusion that ``real world'' phylogenetic trees (that is, those contained in TreeBASE)  are not consistent with the Yule model of evolution. We plan to report  in a future paper on more extensive tests on stochastic models of evolutionary processes, including Ford's $\alpha$-model \cite{Ford} and Aldous' $\beta$-model \cite{Ald1}.

\section*{Acknowledgements}  The research reported in this paper has been partially supported by the Spanish government and the UE FEDER program, through projects MTM2009-07165 and TIN2011-15874-E.

%:Bibliografia

\section*{Appendix: Proof of Proposition 5}
\subsection*{Proof of Proposition \ref{prop:D^2}.(a)}

For every $T\in \TT_n$, let
$$
\oF(T)=S(T)+\Phi(T)=\sum_{1\leq i\leq j\leq n}\varphi_T(i,j), 
$$
and let $\oF_n$ be the random variable that chooses a tree $T\in \TT_n$ and computes $\oF(T)$. We have that
$$
E_Y(\oF_n)=E_Y(S_n)+E_Y(\Phi_n)=n(n-1).
$$

To compute $E_Y(\oFd_n)$, we shall use an argument similar to the one used in the proof of \cite[Prop. 3]{Yulenodal}. Notice that
$$
\begin{array}{l}
E_Y(\oF_n^{(2)}) \displaystyle =\sum_{T\in \TT_n} \oF^{(2)}(T)\cdot p_Y(T)
\\
\quad \displaystyle =\frac{1}{2}
\sum_{k=1}^{n-1}\sum_{S_k\subsetneq\{1,\ldots,n\}\atop |S_k|=k}
 \sum_{T_k\in \TT(S_k)}\sum_{T'_{n-k}\in \TT(S_k^c)}\oF^{(2)}(T_k\widehat{\ }\, {}T'_{n-k})\cdot p_Y(T_k\widehat{\ }\, {}T'_{n-k})
\end{array}
$$
Now, on the one hand,  we have the following easy lemma on $P_Y(T\,\widehat{\ }\, T')$: see  \cite[Lem. 1]{CMR}.

\begin{lemma}\label{lem:pangle}
Let $\emptyset\neq S_k\subsetneq \{1,\ldots,n\}$ with $|S_k|=k$, let  $T_k\in \TT(S_k)$ and $T'_{n-k}\in \TT( S_k^c)$. Then,
$$
P_Y(T_k\widehat{\ }\, {}T'_{n-k})=\dfrac{2}{(n-1)\binom{n}{k}} P(T_{k})P(T'_{n-k}).
$$
\end{lemma}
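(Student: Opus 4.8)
The plan is to prove the identity by direct substitution into the Yule probability formula, exploiting the fact that the internal structure of $T_k\,\widehat{\ }\,T'_{n-k}$ is almost entirely inherited from the two factor trees. Recall that for a tree with $m$ leaves the Yule probability is $P_Y(T)=\frac{2^{m-1}}{m!}\prod_{v\in V_{int}(T)}\frac{1}{\ell_T(v)-1}$. First I would describe the internal nodes of the join: the set $V_{int}(T_k\,\widehat{\ }\,T'_{n-k})$ consists of the new common root $r$ together with the disjoint union $V_{int}(T_k)\sqcup V_{int}(T'_{n-k})$. The crucial observation is that attaching the two subtrees under a common root does not change the number of descendant leaves of any internal node already present in $T_k$ or $T'_{n-k}$; only the new root $r$ is added, and it satisfies $\ell_{T_k\,\widehat{\ }\,T'_{n-k}}(r)=n$.

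Given this, the product over internal nodes factors cleanly. I would write
$$
\prod_{v\in V_{int}(T_k\,\widehat{\ }\,T'_{n-k})}\frac{1}{\ell(v)-1}=\frac{1}{n-1}\cdot\Big(\prod_{v\in V_{int}(T_k)}\frac{1}{\ell(v)-1}\Big)\Big(\prod_{v\in V_{int}(T'_{n-k})}\frac{1}{\ell(v)-1}\Big),
$$
where the factor $1/(n-1)$ comes from the new root via $\ell(r)-1=n-1$. The two remaining products are precisely those appearing in $P(T_k)$ and $P(T'_{n-k})$ (with $P$ the Yule probability on the factor trees), so I can solve for them and substitute, obtaining $\prod_{v\in V_{int}(T_k)}(\ell(v)-1)^{-1}=\frac{k!}{2^{k-1}}P(T_k)$ and the analogous expression with $n-k$ in place of $k$.

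Substituting these into the Yule formula for $T_k\,\widehat{\ }\,T'_{n-k}$, which carries the prefactor $\frac{2^{n-1}}{n!}$, collapses everything to a single explicit constant times $P(T_k)P(T'_{n-k})$. The only real work is bookkeeping of this constant: the powers of two combine as $2^{n-1}/(2^{k-1}\cdot 2^{n-k-1})=2$, and the factorials combine as $k!\,(n-k)!/n!=1/\binom{n}{k}$, yielding exactly $\frac{2}{(n-1)\binom{n}{k}}$. I do not expect any genuine obstacle here; the statement is essentially an exercise in tracking the multiplicative structure of the Yule weights, and the one point that warrants care is assembling the exponent of $2$ and the factorial quotient correctly so that the prefactor matches the claim.
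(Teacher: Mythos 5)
Your proof is correct, and it is complete as written. One thing to be aware of: the paper does not actually prove this lemma itself; it defers it entirely to the citation \cite[Lem.~1]{CMR}. Your argument supplies the missing self-contained proof, and it is the natural one: the internal nodes of $T_k\,\widehat{\ }\,T'_{n-k}$ are the new root $r$ (with $\ell(r)=n$, contributing the factor $1/(n-1)$) together with the internal nodes of $T_k$ and $T'_{n-k}$, whose descendant-leaf counts are unchanged by the join, so the product over internal nodes factors; solving for the two inherited products via $\prod_{v\in V_{int}(T_k)}(\ell(v)-1)^{-1}=\frac{k!}{2^{k-1}}P(T_k)$ and its analogue, and substituting into the prefactor $\frac{2^{n-1}}{n!}$, gives $2^{n-1}/(2^{k-1}2^{n-k-1})=2$ and $k!\,(n-k)!/n!=1/\binom{n}{k}$, exactly the claimed constant. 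Two implicit points you rely on are both fine but worth stating: (i) you apply the Yule formula to trees on arbitrary label sets $S_k$ and $S_k^c$ rather than $\{1,\ldots,k\}$, which is legitimate because the Yule distribution is invariant under relabelings, so $P$ on $\TT(S_k)$ is given by the same formula with $k=|S_k|$; and (ii) the boundary case $k=1$ (or $k=n-1$), where one factor tree is a single leaf with $V_{int}=\emptyset$ and $P=1$, is handled automatically since the corresponding product is empty and $\frac{1!}{2^{0}}P(T_1)=1$.
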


On the other hand, we have the following recursive expression for  $\oFd(T\,\widehat{\ }\, T')$.

\begin{lemma}
Let $\emptyset\neq S_k\subsetneq \{1,\ldots,n\}$ with $|S_k|=k$, let  $T_k\in \TT(S_k)$ and $T'_{n-k}\in \TT( S_k^c)$.
Then
$$
\oFd(T_k\,\widehat{\ }\, T_{n-k}')=\oFd(T_k)+\oFd(T_{n-k}')+2\oF(T_k)+2\oF(T_{n-k}')+\binom{k+1}{2}+\binom{n-k+1}{2}.
$$
\end{lemma}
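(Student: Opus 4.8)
The plan is to track precisely how the cophenetic values and leaf depths change when we glue $T_k$ and $T'_{n-k}$ under a fresh root, and then to split the sum defining $\oFd$ according to where the two leaves lie. Write $T=T_k\,\widehat{\ }\, T'_{n-k}$. The only structural effect of adjoining the new root is that it sits one level above the former roots of $T_k$ and $T'_{n-k}$; consequently every node of $T_k$ and of $T'_{n-k}$ has its depth in $T$ increased by exactly $1$.

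First I would determine the cophenetic values in $T$. For a pair of leaves $i,j$ (allowing $i=j$) that both lie in $S_k$, their lowest common ancestor is the same node of $T_k$ as before, now one level deeper, so $\varphi_T(i,j)=\varphi_{T_k}(i,j)+1$; this includes the diagonal case $\varphi_T(i,i)=\delta_T(i)=\delta_{T_k}(i)+1$, which is why depths and proper cophenetic values can be handled uniformly. Symmetrically, $\varphi_T(i,j)=\varphi_{T'_{n-k}}(i,j)+1$ when $i,j\in S_k^c$. Finally, for a mixed pair with $i\in S_k$ and $j\in S_k^c$ the lowest common ancestor is the new root, whose depth is $0$, so $\varphi_T(i,j)=0$.

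Next I would split the defining sum $\oFd(T)=\sum_{1\leq i\leq j\leq n}\varphi_T(i,j)^2$ into three blocks: pairs with both leaves in $S_k$, pairs with both in $S_k^c$, and mixed pairs. The mixed block contributes $0$ by the computation above. For the $S_k$-block I would substitute $\varphi_T(i,j)=\varphi_{T_k}(i,j)+1$, expand the square as $\varphi_{T_k}(i,j)^2+2\varphi_{T_k}(i,j)+1$, and sum term by term; this yields $\oFd(T_k)+2\oF(T_k)$ plus the number of index pairs, which is the count of $(i,j)$ with $i\leq j$ drawn from a $k$-element set, namely $\binom{k}{2}+k=\binom{k+1}{2}$. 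The $S_k^c$-block is identical with $n-k$ in place of $k$. Adding the three blocks gives exactly the claimed identity.

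There is no real analytic obstacle here; the argument is pure bookkeeping. The one point requiring care is the treatment of the diagonal terms $i=j$: I must confirm that $\delta$ transforms in lockstep with the off-diagonal cophenetic values (it does, since $\varphi_T(i,i)$ is by convention $\delta_T(i)$ and the new root raises all depths by $1$), and I must count the diagonal entries when tallying the index pairs, so that the constant terms assemble into $\binom{k+1}{2}$ and $\binom{n-k+1}{2}$ rather than $\binom{k}{2}$ and $\binom{n-k}{2}$.
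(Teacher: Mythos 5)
Your proposal is correct and follows essentially the same route as the paper's proof: establish that cophenetic values within each block increase by $1$ while mixed pairs vanish, split the sum $\oFd(T_k\,\widehat{\ }\,T'_{n-k})$ accordingly, and expand the squares, with the $\binom{k+1}{2}$ and $\binom{n-k+1}{2}$ terms arising as the counts of index pairs $i\leq j$ in each block. Your explicit attention to the diagonal terms $i=j$ is exactly the bookkeeping the paper relies on implicitly through the convention $\varphi_T(i,i)=\delta_T(i)$.
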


\begin{proof}
Let us assume, without any loss of generality, that $S=\{1,\ldots,m\}$ and $S'=\{m+1,\ldots,n\}$. Then
$$
\varphi_{T_k\,\widehat{\ }\, T_{n-k}'}(i,j)=\left\{
\begin{array}{ll}
\varphi_{T_k}(i,j)+1 & \mbox{ if $1\leq i,j\leq k$}\\
\varphi_{T_{n-k}'}(i,j)+1 & \mbox{ if $k+1\leq i,j\leq n$}\\
0 & \mbox{ otherwise}
\end{array}
\right.
$$
and therefore
$$
\begin{array}{l}
\displaystyle 
\oFd(T_k\,\widehat{\ }\, T_{n-k}')=\sum_{1\leq i\leq j\leq n} \varphi_{T_k\,\widehat{\ }\, T_{n-k}'}(i,j)^2\\
\quad \displaystyle 
=
\sum_{1\leq i\leq j\leq k} (\varphi_{T_k}(i,j)+1)^2+\sum_{k+1\leq i\leq j\leq n} (\varphi_{T_{n-k}'}(i,j)+1)^2\\
\quad \displaystyle 
=\sum_{1\leq i\leq j\leq k} (\varphi_{T_k}(i,j)^2+2\varphi_{T_k}(i,j)+1)+\hspace*{-3ex}\sum_{k+1\leq i\leq j\leq n} (\varphi_{T_{n-k}'}(i,j)^2+2\varphi_{T_{n-k}'}(i,j)+1)\\
\quad \displaystyle 
=\oFd(T_k)+2\oF(T_{k})+\binom{k+1}{2}+\oFd(T_{n-k}')+2\oF(T_{n-k}')+\binom{n-k+1}{2}.
\end{array}
$$
\end{proof}

So, if we set
$f(a,b)=\binom{a+1}{2}+\binom{b+1}{2}$,
we have that 
$$
\begin{array}{l}
E_Y(\oF_n^{(2)})
\\
\quad   \displaystyle =\frac{1}{2} \sum_{k=1}^{n-1}\binom{n}{k}
 \sum_{T_k\in \TT_k}\sum_{T'_{n-k} \in \TT_{n-k}}\hspace*{-2ex}\Big[\oFd(T_k)+\oFd(T'_{n-k})+2(\oF(T_k)+\oF(T_{n-k}'))\\
 \displaystyle  \qquad+f(k,n-k)\Big]
 \frac{2}{(n-1)\binom{n}{k}} P_Y(T_{k})P_Y(T'_{n-k})\\
 \displaystyle\quad   =\frac{1}{n-1}\sum_{k=1}^{n-1} 
\Big[ \sum_{T_k}\sum_{T'_{n-k} }\oFd(T_k)P_Y(T_{k})P_Y(T'_{n-k})   \\
\displaystyle\qquad +\sum_{T_k}\sum_{T'_{n-k} }\oFd(T'_{n-k})P_Y(T_{k})P_Y(T'_{n-k}) \\
\displaystyle\qquad +2\sum_{T_k}\sum_{T'_{n-k} }\oF(T_k)P_Y(T_{k})P_Y(T'_{n-k})\\
\displaystyle\qquad +2\sum_{T_k}\sum_{T'_{n-k} }\oF(T'_{n-k})P_Y(T_{k})P_Y(T'_{n-k}) \\
\displaystyle\qquad +\sum_{T_k}\sum_{T'_{n-k} }f(k,n-k)P_Y(T_{k})P_Y(T'_{n-k}) \Big]\\
  \quad  \displaystyle  =\frac{1}{n-1}\sum_{k=1}^{n-1} 
\Big[ \sum_{T_k}\oFd(T_k)P_Y(T_{k})  +\sum_{T'_{n-k} }\oFd(T'_{n-k})P_Y(T'_{n-k})\\
\displaystyle\qquad +2\sum_{T_k}\oF(T_k)P_Y(T_{k}) +2\sum_{T'_{n-k} }\oF(T'_{n-k})P_Y(T'_{n-k}) +f(k,n-k) \Big]\\
\quad  \displaystyle  =\frac{1}{n-1}\sum_{k=1}^{n-1} 
\Big[E_Y(\oFd_k)+E_Y(\oFd_{n-k}) + 2E_Y(\oF_k)+2E_Y(\oF_{n-k})\\
\displaystyle\qquad +\binom{k+1}{2}+\binom{n-k+1}{2}\Big]\\
\quad  \displaystyle  =\frac{2}{n-1}\sum_{k=1}^{n-1} 
E_Y(\oFd_k) + \frac{4}{n-1}\sum_{k=1}^{n-1}E_Y(\oF_k)+\frac{1}{3}n(n+1).
\end{array}
$$
In particular
$$
 E_Y(\oF_{n-1}^2) =\frac{2}{n-2}\sum_{k=1}^{n-2} 
E_Y(\oFd_k) + \frac{4}{n-2}\sum_{k=1}^{n-2}E_Y(\oF_k)+\frac{1}{3}n(n-1).$$
and therefore
$$
\begin{array}{l}
\displaystyle  E_Y(\oFd_n)=
\frac{n-2}{n-1}\cdot \frac{2}{n-2}\sum_{k=1}^{n-2} E_Y(\oFd_k) +\frac{2}{n-1}E_Y(\oFd_{n-1})\\
\qquad\qquad\quad \displaystyle +
\frac{n-2}{n-1}\cdot \frac{4}{n-2}\sum_{k=1}^{n-2} E_Y(\oF_k) +\frac{4}{n-1}E_Y(\oF_{n-1})\\
\qquad\qquad\quad \displaystyle+\frac{n-2}{n-1}\cdot\frac{1}{3}n(n-1)+ n\\[1ex]
\quad \displaystyle=\frac{n-2}{n-1} E_Y(\oFd_{n-1})+\frac{2}{n-1}E_Y(\oFd_{n-1})+\frac{4}{n-1}E_Y(\oF_{n-1})+n\\[1ex]
\quad \displaystyle=\frac{n}{n-1} E_Y(\oFd_{n-1})+5n-8.
\end{array}
$$
Setting $x_n=E_Y(\oFd_n)/n$, this recurrence becomes
$$
x_n=x_{n-1}+5-\frac{8}{n}
$$
and the solution of this recursive equation with $x_1=E_Y(\oFd_1)=0$ is
$$
x_n=\sum_{k=2}^n\Big(5-\frac{8}{k}\Big) =5(n-1)-8(H_n-1)=5n+3-8H_n
$$
from where we deduce that $E_Y(\oFd_n)=5n^2+3n-8nH_n$,
as we claimed.

\subsection*{Proof of Proposition \ref{prop:D^2}.(b)}

To compute $E_U(\oFd_n)$, we shall use an argument similar to the one used in \cite{MirR10}. For every $k=1,\ldots,n-1$, let  
$$
\begin{array}{rl}
f_{k,n}& =|\{ T\in \TT_n\mid \varphi_T(1,2)=k\}|\\ & =|\{ T\in \TT_n\mid \varphi_T(i,j)=k\}| \mbox{  for every $1\leq i<j\leq n$}\\
d_{k,n}&=|\{ T\in \TT_n\mid \delta_T(1)=k\}|\\ & =|\{ T\in \TT_n\mid \delta_T(i)=k\}| \mbox{  for every $1\leq i\leq n$}
\end{array}
$$
(where $|X|$ denotes the cardinal of the set $X$).

\begin{lemma}\label{lem:9}
For every $n\geq 2$,
$$
E_U(\oFd_n)=\frac{1}{(2n-3)!!}\Big(n\sum_{k=1}^{n-1} k^2\cdot d_{k,n}+\binom{n}{2}\sum_{k=1}^{n-2} k^2\cdot f_{k,n}\Big)
$$
\end{lemma}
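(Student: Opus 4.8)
The plan is to start from the very definition of the expected value under the uniform model and exploit that every tree in $\TT_n$ carries the same probability $P_U(T)=1/(2n-3)!!$. This immediately gives
$$
E_U(\oFd_n)=\sum_{T\in\TT_n}\oFd(T)\cdot P_U(T)=\frac{1}{(2n-3)!!}\sum_{T\in\TT_n}\oFd(T),
$$
so the whole task reduces to evaluating the total $\sum_{T\in\TT_n}\oFd(T)$. To this end I would split the quadratic sum defining $\oFd(T)$ into its diagonal and off-diagonal parts, using the convention $\varphi_T(i,i)=\delta_T(i)$:
$$
\oFd(T)=\sum_{i=1}^n\delta_T(i)^2+\sum_{1\leq i<j\leq n}\varphi_T(i,j)^2.
$$

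Next, summing this over all $T\in\TT_n$ and exchanging the two finite summations, I obtain
$$
\sum_{T\in\TT_n}\oFd(T)=\sum_{i=1}^n\Big(\sum_{T\in\TT_n}\delta_T(i)^2\Big)+\sum_{1\leq i<j\leq n}\Big(\sum_{T\in\TT_n}\varphi_T(i,j)^2\Big).
$$
The key step is a symmetry (relabeling-invariance) argument: since $\TT_n$ is closed under relabelings of the leaves, for any single leaf $i$ the inner sum $\sum_T\delta_T(i)^2$ equals its value for $i=1$, and for any pair $i<j$ the inner sum $\sum_T\varphi_T(i,j)^2$ equals its value for the pair $(1,2)$. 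Hence the two outer sums collapse into $n$ and $\binom{n}{2}$ identical copies, respectively, giving
$$
\sum_{T\in\TT_n}\oFd(T)=n\sum_{T\in\TT_n}\delta_T(1)^2+\binom{n}{2}\sum_{T\in\TT_n}\varphi_T(1,2)^2.
$$

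Finally I would reorganize each of these two sums by grouping the trees according to the value taken by $\delta_T(1)$, respectively $\varphi_T(1,2)$, which is precisely what the counting numbers $d_{k,n}$ and $f_{k,n}$ record. Here I must pin down the ranges of $k$: a leaf of a binary tree on $n$ leaves has depth between $1$ and $n-1$, so $\sum_{T}\delta_T(1)^2=\sum_{k=1}^{n-1}k^2 d_{k,n}$; the lowest common ancestor of two distinct leaves has depth between $0$ and $n-2$, and since the $k=0$ term contributes nothing we may write $\sum_{T}\varphi_T(1,2)^2=\sum_{k=1}^{n-2}k^2 f_{k,n}$. Substituting these two expressions into the previous display, and dividing by $(2n-3)!!$, yields exactly the claimed identity.

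The computation is essentially bookkeeping, and I expect no genuine obstacle. The only points requiring care are the symmetry argument that justifies replacing each per-leaf and per-pair sum by a single representative — which is where the relabeling-invariance of the uniform distribution is used — together with correctly identifying the admissible ranges $1\leq k\leq n-1$ and $1\leq k\leq n-2$ so that the stated summation limits are exact.
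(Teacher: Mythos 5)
Your proof is correct and follows essentially the same route as the paper's: both pull out the uniform probability $1/(2n-3)!!$, split $\oFd(T)$ into diagonal (depth) and off-diagonal (cophenetic) parts, exchange summations, and use relabeling invariance to reduce to the counts $d_{k,n}$ and $f_{k,n}$ with the correct ranges $1\leq k\leq n-1$ and $1\leq k\leq n-2$. The only cosmetic difference is that you collapse the per-leaf and per-pair sums to the representatives $1$ and $(1,2)$ before grouping trees by value, whereas the paper groups first and then collapses; this changes nothing of substance.
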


\begin{proof}
Under the uniform model,
$$
E_U(\oFd_n)=\frac{\sum_{T\in \TT_n} \oFd(T)}{(2n-3)!!},
$$
where
$$
\begin{array}{l}
\displaystyle \sum_{T\in \TT_n} \oFd(T)   =\sum_{T\in \TT_n}\sum_{1 \leq i\leq j\leq n} \varphi_T(i,j)^2 =\sum_{1 \leq i\leq j\leq n}\sum_{T\in \TT_n} \varphi_T(i,j)^2\\
\quad \displaystyle =\sum_{1 \leq i\leq n}\sum_{T\in \TT_n} \delta_T(i)^2   +\sum_{1 \leq i< j\leq n}\sum_{T\in \TT_n} \varphi_T(i,j)^2\\
\quad \displaystyle =\sum_{1 \leq i\leq n}\sum_{k=1}^{n-1} k^2\cdot |\{ T\in \TT_n\mid \delta_T(i)=k\}|\\
\qquad\qquad \displaystyle  +\sum_{1 \leq i< j\leq n}\sum_{k=1}^{n-2} k^2\cdot |\{ T\in \TT_n\mid \varphi_T(i,j)=k\}|\\
\quad \displaystyle =\sum_{1 \leq i\leq n}\sum_{k=1}^{n-1} k^2\cdot d_{k,n}+\sum_{1 \leq i< j\leq n}\sum_{k=1}^{n-2} k^2\cdot f_{k,n}\\
\quad \displaystyle =n\sum_{k=1}^{n-1} k^2\cdot d_{k,n}+\binom{n}{2}\sum_{k=1}^{n-2} k^2\cdot f_{k,n}.
\end{array}
$$
\end{proof}

A formula for $d_{k,n}$ was obtained in the proof of \cite[Lem. 21]{MRR}:
\begin{equation}\label{(d)}
d_{k,n}= \frac{(2n-k-3)!\cdot k}{(n-k-1)!2^{n-k-1}}.
\end{equation}
As far as $f_{k,n}$ goes, we have the following result. In it, and henceforth, ${}_pF_q$
denotes  the (\emph{generalized}) \emph{hypergeometric function} defined byå
$$
{}_pF_q\left(\begin{array}{rrr} a_1,&\ldots, &a_p \\[-0.5ex] b_1,& \ldots,  &b_q\end{array};z\right)=\sum_{k\geq 0} \frac{(a_1)_k\cdots (a_p)_k}{(b_1)_k\cdots (b_q)_k}\cdot \frac{z^k}{k!},
$$ 
where $(a)_0=1$ and $(a)_k := a\cdot (a+1)\cdots (a+k-1)$ for $k\geq 1$.

\begin{lemma}\label{lem:f}
For every $n\geq 2$, $f_{0,n}=(2n-4)!!$ and
$$
f_{k,n}=\frac{(2n-k-5)!k}{(2n-2k-4)!!} \cdot {}_3 F_2
\bigg(\begin{array}{l} 1,\ 2-n,\ k+2-n \\[-0.5ex]  \frac{k+5}{2}-n,\  \frac{k}{2}-n+3\end{array};1\bigg)
$$
for every $k=1,\ldots,n-2$.
\end{lemma}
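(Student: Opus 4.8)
The plan is to count the trees $T\in\TT_n$ with $\varphi_T(1,2)=k$ by decomposing each one along the path from the root to $v=\mathrm{LCA}_T(1,2)$ and then reading $f_{k,n}$ off a generating function. For $k\geq 1$ this path $u_0,u_1,\dots,u_k=v$ (with $u_0$ the root) has $k$ arcs, and each $u_i$ with $i<k$ carries, besides its spine child $u_{i+1}$, a second child rooting a subtree $S_i$ that contains neither $1$ nor $2$, while $v$ roots the two subtrees $T_1\ni 1$ and $T_2\ni 2$. Since the spine is forced by $v$ and the $S_i$ are told apart by their depths, this is a bijection between the trees in question and tuples $(T_1,T_2,S_0,\dots,S_{k-1})$ together with a distribution of the $n-2$ free leaves $\{3,\dots,n\}$ among them. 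Using that the exponential generating function of phylogenetic trees on at least one leaf is $B(x)=1-\sqrt{1-2x}$, and that marking one distinguished leaf replaces $B$ by $B'(x)=(1-2x)^{-1/2}$, the product rule for labelled structures yields
$$ f_{k,n}=(n-2)!\,[x^{n-2}]\,\frac{\bigl(1-\sqrt{1-2x}\bigr)^{k}}{1-2x}. $$
The case $k=0$ falls out of the same scheme: then $1$ and $2$ sit in different root subtrees, the generating function is $(1-2x)^{-1}$, and $f_{0,n}=(n-2)!\,2^{n-2}=(2n-4)!!$.

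For $k\geq 1$ I would turn this coefficient into an explicit single sum. Expanding $(1-2x)^{-1}=\sum_{j}2^{j}x^{j}$ and noting that $[x^{a}]\,B(x)^{k}=d_{k,a+1}/a!$ (the identical spine decomposition for the depth of a single leaf, whose closed form is \eqref{(d)}), the convolution collapses, after the substitution $\ell=a-k$, to
$$ f_{k,n}=(n-2)!\,k\,2^{\,n-2}\sum_{\ell=0}^{\,n-k-2}\frac{(2\ell+k-1)!}{(\ell+k)!\,\ell!\,2^{\,2\ell+k}}. $$
Inspecting the ratio of consecutive summands shows that the term at index $\ell$ equals its value $t_0=1/(k2^{k})$ at $\ell=0$ times $\bigl(\tfrac{k+1}{2}\bigr)_{\ell}\bigl(\tfrac{k}{2}\bigr)_{\ell}/\bigl((k+1)_{\ell}\,\ell!\bigr)$, so the sum is a ${}_2F_1\bigl(\tfrac{k+1}{2},\tfrac{k}{2};k+1;1\bigr)$ truncated at $\ell=n-k-2$.

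The crux is to recast this truncated ${}_2F_1$ at $1$ as the stated ${}_3F_2$. I would reverse the summation, putting $m=N-\ell$ with $N=n-k-2$ and applying $(\alpha)_{N-m}=(-1)^{m}(\alpha)_{N}/(1-\alpha-N)_{m}$ to each factor. The two numerator Pochhammers reflect into the denominator parameters $1-\tfrac{k+1}{2}-N=\tfrac{k+5}{2}-n$ and $1-\tfrac{k}{2}-N=\tfrac{k}{2}-n+3$, while $(k+1)_{N-m}$ and $(N-m)!$ reflect into the numerator parameters $-k-N=2-n$ and $-N=k+2-n$; inserting the harmless factor $(1)_{m}/m!=1$ supplies the third numerator parameter $1$ together with the needed $m!$, producing exactly ${}_3F_2\bigl(1,2-n,k+2-n;\tfrac{k+5}{2}-n,\tfrac{k}{2}-n+3;1\bigr)$. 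Pulling $t_0$ out of the sum, the leftover constant is $(n-2)!\,2^{\,n-k-2}\,u_{N}$ with $u_{N}=\bigl(\tfrac{k+1}{2}\bigr)_{N}\bigl(\tfrac{k}{2}\bigr)_{N}/\bigl((k+1)_{N}\,N!\bigr)$; here the duplication formula $\bigl(\tfrac{k}{2}\bigr)_{N}\bigl(\tfrac{k+1}{2}\bigr)_{N}=(k)_{2N}/4^{N}$ collapses the half-integer Pochhammers into $(2n-k-5)!/(k-1)!$, and after cancelling $(n-2)!$ and using $(2n-2k-4)!!=2^{\,n-k-2}(n-k-2)!$ the constant reduces exactly to $(2n-k-5)!\,k/(2n-2k-4)!!$.

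The main obstacle is this last block of bookkeeping: the summation reversal and the attendant tracking of signs and half-integer shifts must be carried out without error, and one must be careful about how to read the resulting ${}_3F_2(\,\cdot\,;1)$. For some pairs $(k,n)$ a numerator parameter and a denominator parameter coincide at the same non-positive integer, so the fully extended series is ill-defined; the correct object is always the terminating polynomial cut off by the numerator parameter $k+2-n$, and it is this terminating sum that equals $f_{k,n}$.
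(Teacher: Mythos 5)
Your proposal is correct, and it reaches the stated formula by a route that genuinely differs from the paper's in how the counting is organized, even though the underlying decomposition (the spine from the root to $\mathrm{LCA}_T(1,2)$ with ordered subtrees hanging off it, plus the structure below the LCA) is the same. The paper counts directly: it chooses the $m$ extra leaves below the LCA ($\binom{n-2}{m}$ ways), counts the trees rooted at the LCA by $f_{0,m+2}=(2m)!!$ (proved separately by induction on $n$), and quotes the ordered-$k$-forest count from an earlier paper, arriving at
$$
f_{k,n}=\frac{(n-2)!\,k}{2^{n-k-2}}\sum_{m=0}^{n-k-2}\frac{4^{m}\,(2n-2m-k-5)!}{(n-m-2)!\,(n-m-k-2)!},
$$
which is exactly your single sum read with the reversed index $m=n-k-2-\ell$. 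Your generating-function packaging buys three things: the case $k=0$ falls out of $(1-2x)^{-1}$ with no separate induction; the forest count is rederived internally from $[x^{a}]B(x)^{k}=d_{k,a+1}/a!$ together with formula (\ref{(d)}), rather than cited; and the distribution of the labels $\{3,\ldots,n\}$ is handled automatically by the labelled product rule. For the hypergeometric endgame the two arguments are equivalent bookkeeping run in opposite directions: the paper expands the Pochhammer symbols of the stated ${}_3F_2$ into factorials and checks term by term that the series reproduces the combinatorial sum, while you recognize your sum as a truncated ${}_2F_1\bigl(\tfrac{k}{2},\tfrac{k+1}{2};k+1;1\bigr)$ and reverse the summation; your way has the advantage of explaining structurally where the parameters $\tfrac{k+5}{2}-n$, $\tfrac{k}{2}-n+3$, $2-n$, $k+2-n$ come from (they are the reflections of $\tfrac{k+1}{2}$, $\tfrac{k}{2}$, $k+1$ and $1$ under the index reversal). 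Your closing caveat about reading the ${}_3F_2$ as the sum terminated by the numerator parameter $k+2-n$ is well taken --- for instance, when $k=2$ the parameters $k+2-n$ and $\tfrac{k}{2}-n+3$ coincide at a non-positive integer, so the unrestricted series is ambiguous --- and it agrees with the convention the paper uses implicitly when it evaluates the Pochhammer symbols only for $m\leq n-k-2$.
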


\begin{proof}
Let us start by proving $f_{0,n}=(2n-4)!!$ by induction on $n$. It is clear that $f_{0,2}=1=(2\cdot 2-4)!!$. Assume now that $f_{0,n-1}=(2(n-1)-4)!!$.
Every phylogenetic tree $T$ with $n$ leaves such that $\varphi_T(1,2)=0$, that is, where $LCA_T(1,2)$ is the root, is obtained by taking a phylogenetic tree $T'$ with $n-1$ leaves such that $\varphi_{T'}(1,2)=0$ and adding a new pendant edge, ending in the leaf $n$, to any edge in $T'$.
Then, since there are $f_{0,n-1}=(2n-6)!!$ trees $T'\in \TT_{n-1}$ such that $\varphi_{T'}(1,2)=0$, and each one of them has $2(n-1)-2$ edges where we can add the new edge, we obtain
$$
f_{0,n}=(2n-4)(2n-6)!!=(2n-4)!!.
$$

Now, to compute $f_{k,n}$ for $k\geq 1$, we shall study the structure of a tree
$T\in \TT_n $ such that $\varphi_T(1,2)=k$; to simplify the notations, let us denote by $x$ the node $LCA_T(1,2)$, which has depth $k$, and by $T_0$ the subtree of $T$ rooted at $x$.

Then, on the one hand, $T_0$ is a phylogenetic tree on a subset $S_0\subseteq \{1,\ldots,n\}$ containing $1,2$, and since its root $x$ is the LCA of 1 and 2 in $T$, we have that $\varphi_{T_0}(1,2)=0$. On the other hand, there is a path $(r=v_1,v_2,v_3,\ldots,v_{k+1}=x)$ in $T$ from $r$ to $x$. For every $j=1,\ldots, k$, let $T_j$ be the subtree rooted at the child of $v_j$ other than $v_{j+1}$;  see Fig.~\ref{fig:forests}.

So, the tree $T$ is determined by:
\begin{itemize}
\item A number $0\leq m\leq n-k-2$, so that $m+2$ will be the number of leaves of the phylogenetic tree $T_0$ rooted at $LCA_T(1,2)$
  
\item A subset $\{i_1,\ldots,i_m\}$ of $\{3,\ldots, n\}$. There are $\binom{n-2}{m}$  such subsets.

\item A phylogenetic tree $T_0$ on $\{1,2,i_1,\ldots,i_m\}$ such that $\varphi_{T_0}(1,2)=0$. There are $f_{0,m+2}=(2m)!!$ such trees.

\item An \emph{ordered $k$-forest}, that is, an ordered sequence of phylogenetic trees $(T_1,T_1,\ldots,T_{k})$ such that $\bigcup_{i=1}^k L(T_i)=\{1,\ldots,n\}-\{1,2,i_1,\ldots,i_m\}$. The number of such ordered $k$-forests is (see, for instance,
\cite[Lem.~1]{MirR10})
$$
\frac{(2n-2m-k-5)!k}{(n-m-k-2)!2^{n-m-k-2}}.
$$ 
\end{itemize}

\begin{figure}[htb]
\begin{center}
\begin{tikzpicture}[thick,>=stealth,scale=0.3]
\draw(0,2) node[tre] (x) {}; \etq x
\draw(-2,-1.5) node  {\scriptsize $1$}; 
\draw(2,-1.5) node  {\scriptsize $2$}; 
\draw (x)--(-2,-1)--(2,-1)--(x);
\draw(0,0) node  {\footnotesize $T_0$};
\draw(2,4) node[tre] (x1) {}; 
\draw (x1)--(x);
\draw (x1)--(3,5);
\draw (3.3,5.3) node {.};
\draw (3.6,5.6) node {.};
\draw (3.9,5.9) node {.};
\draw(5,7) node[tre] (x2) {}; 
\draw (4.2,6.2)--(x2);
\draw(7,9) node[tre] (r) {}; 
\draw (x2)--(r);
\draw(5,3) node[tre] (tk) {}; 
\draw (tk)--(3.6,0)--(6.4,0)--(tk);
\draw(5,1) node  {\footnotesize $T_{k}$};
\draw (x1)--(tk);
\draw(8,6) node[tre] (t2) {}; 
\draw (t2)--(6.6,3)--(9.4,3)--(t2);
\draw(8,4) node  {\footnotesize $T_{2}$};
\draw (x2)--(t2);
\draw(10,8) node[tre] (t1) {}; 
\draw (t1)--(8.6,5)--(11.4,5)--(t1);
\draw(10,6) node  {\footnotesize $T_{1}$};
\draw (r)--(t1);
\end{tikzpicture}
\end{center}
\caption{\label{fig:forests} 
The structure of a tree $T$ with $\varphi_T(1,2)=k$.}
\end{figure}
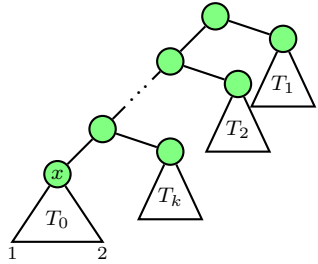

This shows that $f_{k,n}$ can be computed as 
$$
\begin{array}{rl}
f_{k,n} & = \displaystyle\sum _{m=0}^{n-k-2}  \mbox{(number of ways  of choosing $\{i_1,\ldots,i_m\}$)}\\[-2ex] &
 \qquad\qquad \cdot \mbox{(number of trees in $\TT_{m+2}$ with $\varphi_T(1,2)=0$)} \\
& \qquad\qquad \cdot \mbox{(number of  ordered $k$-forests  on $n-m-2$ leaves)} \\
& =
\displaystyle \sum_{m=0}^{n-k-2}  \binom{n-2}{m}\cdot (2m)!! \cdot \frac{(2n-2m-k-5)!k}{(n-m-k-2)!2^{n-m-k-2}}\\
& = \displaystyle k\sum_{m=0}^{n-k-2} \frac{(n-2)! m! 2^m (2n-2m-k-5)!}{m!(n-m-2)!(n-m-k-2)!2^{n-m-k-2}}\\
& = \displaystyle\frac{(n-2)!k}{2^{n-k-2}}\sum_{m=0}^{n-k-2} \frac{4^m (2n-2m-k-5)!}{(n-m-2)!(n-m-k-2)!}
\end{array}
$$
Now, taking into account that
$$
\begin{array}{l}
(1)_m =  m!\\
(2-n)_m =(-1)^m\dfrac{(n-2)!}{(n-m-2)!}\\
(k+2-n)_m =  (-1)^m\dfrac{(n-k-2)!}{(n-k-m-2)!} \\
 \left(\dfrac{k+5}{2}-n\right)_m =  \dfrac{(-1)^m (2n-k-5)!!}{2^m (2n-k-2m-5)!!},\\
\left(\dfrac{k}{2}-n+3\right)_m  = \dfrac{(-1)^m (2n-k-6)!!}{2^m (2n-k-2m-6)!!}
\end{array}
$$
we have that
$$
\begin{array}{l}
\displaystyle {}_3 F_2\bigg(\begin{array}{l} 1,\ 2-n,\ k+2-n \\ \frac{k+5}{2}-n,\  \frac{k}{2}-n+3\end{array};1\bigg)
=\sum_{m\geq 0} \frac{(1)_m\cdot (2-n)_m\cdot (k+2-n)_m}{(\frac{k+5}{2}-n)_m\cdot (\frac{k}{2}-n+3)_m} \cdot \frac{1}{m!}\\
\quad\displaystyle =\sum_{m\geq 0}\frac{m!(n-2)! (n-k-2)!2^m(2n-k-2m-5)!!2^m(2n-k-2m-6)!!}{(n-m-2)! (n-k-m-2)!(2n-k-5)!!(2n-k-6)!!m!}\\
\quad\displaystyle =\sum_{m= 0}^{n-k-2}\frac{(n-2)! (n-k-2)!(2n-k-2m-5)!2^{2m}}{(n-m-2)! (n-k-m-2)!(2n-k-5)!}\\
\quad\displaystyle =\frac{(n-2)! (n-k-2)!}{(2n-k-5)!}\sum_{m= 0}^{n-k-2}\frac{(2n-k-2m-5)!4^{m}}{(n-m-2)! (n-k-m-2)!}
\end{array}
$$
from where we deduce that
$$
\begin{array}{l}
\displaystyle \sum_{m= 0}^{n-k-2}\frac{(2n-k-2m-5)!4^{m}}{(n-m-2)! (n-k-m-2)!}\\
\qquad\qquad\displaystyle=\frac{(2n-k-5)!}{(n-2)! (n-k-2)!}{}_3 F_2\bigg(\begin{array}{l} 1,\ 2-n,\ k+2-n \\ \frac{k+5}{2}-n,\  \frac{k}{2}-n+3\end{array};1\bigg)
\end{array}
$$
and hence
$$
\begin{array}{rl}
f_{k,n} & = \displaystyle \displaystyle\frac{(n-2)!k}{2^{n-k-2}}\sum_{m=0}^{n-k-2} \frac{4^m (2n-2m-k-5)!}{(n-m-2)!(n-m-k-2)!}\\
& \displaystyle =\frac{(n-2)!k}{2^{n-k-2}} \cdot \frac{(2n-k-5)!}{(n-2)! (n-k-2)!}{}_3 F_2\bigg(\begin{array}{l} 1,\ 2-n,\ k+2-n \\ \frac{k+5}{2}-n,\  \frac{k}{2}-n+3\end{array};1\bigg)\\
& \displaystyle =\frac{(2n-k-5)!k}{(2n-2k-4)!!}\cdot {}_3 F_2\bigg(\begin{array}{l} 1,\ 2-n,\ k+2-n \\ \frac{k+5}{2}-n,\  \frac{k}{2}-n+3\end{array};1\bigg)
\end{array}
$$
as we claimed.
\end{proof}

We must compute now the sums
$$
\sum_{k=1}^{n-1} k^2\cdot d_{k,n},\quad \sum_{k=1}^{n-2} k^2\cdot f_{k,n}.
$$
To do that, we shall use the following auxiliary lemma.

\begin{lemma}\label{lem:U}
For every $n\geq 2$ and $m\geq 1$, let
$$
U_{n,m} = \sum_{k=0}^{n-2} \frac{k^m (n+k-2)! }{k!2^{k}}.
$$
Then,
$$
\begin{array}{rl}
U_{n,0} &=  (2n-4)!! \\
U_{n,1} &=  (n-1)(2n-4)!!- (2n-3)!! \\
U_{n,2} &=  (n^2-1)(2n-4)!!-(2n-1)(2n-3)!! \\
U_{n,3} &= (n^3+3n^2-3n-1)(2n-4)!! -(3n^2+n-1) (2n-3)!!
\end{array}
$$
\end{lemma}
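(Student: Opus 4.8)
The plan is to reduce all four evaluations to a single first-order recurrence satisfied by the summand and then bootstrap from $m=0$ up to $m=3$. Write $a_k=\dfrac{(n+k-2)!}{k!\,2^{k}}$, so that $U_{n,m}=\sum_{k=0}^{n-2}k^{m}a_k$. The one fact driving everything is the term ratio
\[
2(k+1)a_{k+1}=(n+k-1)a_k,
\]
valid for every $k\ge 0$, together with the observation that the top-of-range value is explicit: $a_{n-1}=\dfrac{(2n-3)!}{(n-1)!\,2^{n-1}}$, whence $2(n-1)a_{n-1}=(2n-3)!!$.

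First I would turn the term ratio into a summed (creative-telescoping) identity. Multiplying the recurrence by an arbitrary polynomial weight $P(k)$, summing $k$ from $0$ to $n-2$, shifting the index $k\mapsto k+1$ on the left, and peeling off the resulting top term $a_{n-1}$ gives
\[
\sum_{k=0}^{n-2}\big[2kP(k-1)-(n+k-1)P(k)\big]a_k=-2(n-1)P(n-2)\,a_{n-1}.
\]
Taking $P(k)=1,\;k,\;k^{2}$ makes the bracket equal to $k-(n-1)$, $k^{2}-(n+1)k$, and $k^{3}-(n+3)k^{2}+2k$ respectively, so the left-hand sides are exactly $U_{n,1}-(n-1)U_{n,0}$, $U_{n,2}-(n+1)U_{n,1}$, and $U_{n,3}-(n+3)U_{n,2}+2U_{n,1}$; the right-hand sides are $-(2n-3)!!$, $-(n-2)(2n-3)!!$, and $-(n-2)^2(2n-3)!!$ after using $2(n-1)a_{n-1}=(2n-3)!!$. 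These three linear relations let me climb from $U_{n,0}$ to $U_{n,1}$, $U_{n,2}$, $U_{n,3}$.

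It remains to pin down the base value $U_{n,0}=(2n-4)!!$, which the telescoping alone cannot supply. Here I would induct on $n$: using $a_k^{(n)}=(n+k-2)a_k^{(n-1)}$ to rewrite $U_{n,0}$ in terms of level-$(n-1)$ sums, splitting $n+k-2=(n-2)+k$, isolating the boundary term $a_{n-2}^{(n-1)}$, and feeding in the $P(k)=1$ relation at level $n-1$, the boundary terms cancel and one obtains the clean recurrence $U_{n,0}=2(n-2)U_{n-1,0}$; with $U_{2,0}=1$ this telescopes to $2^{n-2}(n-2)!=(2n-4)!!$. (Alternatively one can read $U_{n,0}=(n-2)!\sum_{k}\binom{n+k-2}{k}2^{-k}$ as exactly half of the full negative-binomial series $2^{n-1}$, by the median symmetry of a fair binomial on an odd number of trials.)

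Finally, substituting $U_{n,0}=(2n-4)!!$ upward through the three relations and simplifying yields the stated closed forms. I do not expect a genuine obstacle here: the only points needing care are the bookkeeping of the peeled-off top term $a_{n-1}$ when shifting the summation index, and the separate treatment of the base case $U_{n,0}$; the polynomial algebra producing the brackets and the final collection of the $(2n-4)!!$ and $(2n-3)!!$ coefficients is routine.
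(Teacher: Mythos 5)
Your proof is correct, but it takes a genuinely different route from the paper. The paper evaluates each $U_{n,m}$ by hypergeometric machinery: it splits the finite sum into an infinite series minus its tail, recognizes each piece via the lookup algorithm of Petkovsek--Wilf--Zeilberger as a ${}_2F_1$ or ${}_3F_2$ evaluated at $\frac12$, and then applies Pfaff-type transformations (Abramowitz--Stegun 15.3.4) and contiguous-function identities to reach closed forms; it carries this out in detail only for $m=2$ and leaves the other cases to the reader. You instead exploit the term ratio $2(k+1)a_{k+1}=(n+k-1)a_k$ of the summand $a_k=(n+k-2)!/(k!\,2^k)$ directly: summing it against polynomial weights $P(k)=1,k,k^2$ and peeling off the boundary term $2(n-1)a_{n-1}=(2n-3)!!$ yields the three linear relations
$$
U_{n,1}-(n-1)U_{n,0}=-(2n-3)!!,\quad
U_{n,2}-(n+1)U_{n,1}=-(n-2)(2n-3)!!,\quad
U_{n,3}-(n+3)U_{n,2}+2U_{n,1}=-(n-2)^2(2n-3)!!,
$$
which I have checked reproduce exactly the stated closed forms once $U_{n,0}=(2n-4)!!$ is known; your induction $U_{n,0}=2(n-2)U_{n-1,0}$ (with the boundary terms cancelling via the level-$(n-1)$ relation) is also correct, as is the alternative via $\sum_{k=0}^{m}\binom{m+k}{k}2^{-k}=2^m$. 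The trade-off: the paper's method is mechanical case-by-case and leans on special-function transformations and careful tail bookkeeping, while yours is elementary, self-contained, treats all four values through one summation-by-parts identity, avoids infinite series entirely, and extends immediately to higher $m$ by taking $P(k)=k^3,k^4,\dots$; its only delicate points are the ones you flag, namely the peeled-off top term and the separate base case.
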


\begin{proof}
The proof of these identities is standard, using well known equalities for hypergeometric functions and the \emph{lookup algorithm} given in \cite[p. 36]{AeqB}. We shall prove in detail the identity for $m=2$, and we leave the details of the rest to the reader.

Notice that
$$
\begin{array}{l}
\displaystyle U_{n,2}=\sum_{k=0}^{n-2} \frac{k^2 (n+k-2)! }{k!2^{k}}=\sum_{k=1}^{n-2} \frac{k^2 (n+k-2)! }{k!2^{k}}=\sum_{k=0}^{n-3} \frac{(k+1)^2 (n+k-1)! }{(k+1)!2^{k+1}}\\
\displaystyle \qquad
=\sum_{k=0}^{\infty} \frac{(k+1)^2 (n+k-1)! }{(k+1)!2^{k+1}}
-\sum_{k=n-2}^{\infty} \frac{(k+1)^2 (n+k-1)! }{(k+1)!2^{k+1}}
\end{array}
$$
Set
$$
X_n=\sum_{k=0}^{\infty} \frac{(k+1)^2 (n+k-1)! }{(k+1)!2^{k+1}},\qquad Y_n=\sum_{k=n-2}^{\infty} \frac{(k+1)^2 (n+k-1)! }{(k+1)!2^{k+1}}
$$
We compute now these two summands.

As to $X_n$,
$$
X_n=\frac{(n-1)!}{2}\sum_{k=0}^{\infty} \frac{(k+1)^2 (n+k-1)! }{(n-1)!(k+1)!2^{k}}
$$
If we set
$$
t_{k}=\frac{(k+1)^2 (n+k-1)! }{(n-1)!(k+1)!2^{k}},
$$
we have that
$$
\frac{t_{k+1}}{t_k}=\frac{(k+2)(k+n)}{(k+1)^2}\cdot \frac{1}{2}
$$
and therefore, by the \emph{lookup algorithm} \cite[p. 36]{AeqB}, we have that
$$
\begin{array}{l}
X_n  \displaystyle =\frac{(n-1)!}{2}\cdot {}_2 F_1
\bigg(\begin{array}{l} 2,\ n \\[-0.5ex] 1\end{array};\frac{1}{2}\bigg)\\
\quad \displaystyle =\frac{(n-1)!}{2}\cdot 2^n\cdot {}_2 F_1
\bigg(\begin{array}{l} n,\ -1 \\[-0.5ex] 1\end{array};-1\bigg)\quad\mbox{(using (15.3.4) in \cite[p. 559]{AS})}\\
\quad  \displaystyle =(n-1)!2^{n-1}\sum_{k\geq 0} \frac{(n)_k(-1)_k}{(1)_k}\cdot \frac{(-1)^k}{k!}\\
\quad  \displaystyle =(n-1)!2^{n-1}\Big(\frac{(n)_0(-1)_0}{(1)_0}\cdot \frac{(-1)^0}{0!}+
\frac{(n)_1(-1)_1}{(1)_1}\cdot \frac{(-1)^1}{1!}\Big)\\
\quad  \displaystyle=(n-1)!2^{n-1}(n+1)
\end{array}
$$

As to $Y_n$,
$$
\begin{array}{l} 
\displaystyle Y_n=
\sum_{k=0}^{\infty} \frac{(k+n-1)^2 (2n+k-3)! }{(k+n-1)!2^{k+n-1}}
\\
\qquad \displaystyle  =
\frac{(n-1)^2(2n-3)!}{(n-1)! 2^{n-1}}\cdot 
\sum_{k=0}^{\infty} \frac{(k+n-1)^2 (2n+k-3)! }{(k+n-1)!2^{k}\cdot \frac{(n-1)^2(2n-3)!}{(n-1)!}}
\end{array}
$$
If we take now
$$
t_{k}= \frac{(k+n-1)^2 (2n+k-3)! }{(k+n-1)!2^{k}\cdot \frac{(n-1)^2(2n-3)!}{(n-1)!}}
$$
we have that
$$
\frac{t_{k+1}}{t_k}=\frac{(n+k)(2n+k-2)}{(k+n-1)^2}\cdot \frac{1}{2}
$$
and therefore, again by the \emph{lookup algorithm} \cite[p. 36]{AeqB}, we have that
$$
\begin{array}{l}
Y_n  \displaystyle =\frac{(n-1)^2(2n-3)!}{(n-1)! 2^{n-1}}\cdot {}_3 F_2
\bigg(\begin{array}{l} 1,\ n,\ 2n-2 \\[-0.5ex] n-1,\ n-1\end{array};\frac{1}{2}\bigg)\\
\qquad \displaystyle  =
\frac{(n-1)^2(2n-3)!}{(n-1)! 2^{n-1}}\Big[
{}_2 F_1
\bigg(\begin{array}{l} 2n-2,\ 1 \\[-0.5ex] n-1\end{array};\frac{1}{2}\bigg)+\frac{1}{n-1}
\cdot {}_2 F_1
\bigg(\begin{array}{l} 2n-1,\ 2 \\[-0.5ex] n\end{array};\frac{1}{2}\bigg)\Big]\\
\quad \hspace*{\fill} \mbox{(using \cite{HF1})}.
\end{array}
$$
Now
$$
\begin{array}{l}
\displaystyle {}_2 F_1
\bigg(\begin{array}{l} 2n-2,\ 1 \\[-0.5ex] n-1\end{array};\frac{1}{2}\bigg)=2\cdot
{}_2 F_1
\bigg(\begin{array}{l} 1-n,\ 1 \\[-0.5ex] n-1\end{array};-1\bigg)\quad\mbox{(using (15.3.4) in \cite[p. 559]{AS})}\\
\qquad \displaystyle=2\cdot \frac{2^{2(n-2)}\Gamma(n-1)}{\Gamma(2n-2)}\Big[\frac{\Gamma(n-1)}{\Gamma(0)}+
\frac{\Gamma(n)}{\Gamma(1)}+\frac{2\Gamma(n-\frac{1}{2})}{\Gamma(\frac{1}{2})}\Big]\quad\mbox{(using \cite{HF2})}\\
\qquad \displaystyle =2\cdot \frac{2^{2(n-2)}(n-2)!}{(2n-3)!}\Big[(n-1)!+2\cdot \frac{(2n-3)!!}{2^{n-1}}\Big]\\
\qquad \displaystyle=\frac{2^{n-1}(n-1)!}{(2n-3)!!}+2\\[2ex]
\displaystyle  {}_2 F_1
\bigg(\begin{array}{l} 2n-1,\ 2 \\[-0.5ex] n\end{array};\frac{1}{2}\bigg)=
2^2\cdot
{}_2 F_1
\bigg(\begin{array}{l} 2,\ 1-n \\[-0.5ex] n\end{array};-1\bigg)\quad\mbox{(using (15.3.4) in \cite[p. 559]{AS})}\\
\qquad \displaystyle=
4\cdot \frac{\Gamma(n)}{2^{2(2-n)}\Gamma(2n-1)}\Big(\frac{\Gamma(n-\frac{1}{2})}{\Gamma(\frac{1}{2})}+\frac{\Gamma(n+\frac{1}{2})}{\Gamma(\frac{3}{2})}+2\Gamma(n)\Big) \quad\mbox{(using \cite{HF2})}\\
\qquad \displaystyle=
\frac{2^{2n-2}(n-1)!}{(2n-2)!}\Big(\frac{(2n-3)!!}{2^{n-1}}+\frac{(2n-1)!!}{2^{n-1}}+2\cdot (n-1)!\Big)\\
\qquad \displaystyle=
\frac{2^{n-1}(n-1)!}{(2n-2)!}({(2n-3)!!}+{(2n-1)!!}+2^n\cdot (n-1)!)
\end{array}
$$
Therefore,
$$
\begin{array}{l}
\displaystyle Y_n=\frac{(n-1)^2(2n-3)!}{(n-1)! 2^{n-1}}\Big[\frac{2^{n-1}(n-1)!}{(2n-3)!!}+2\\
\qquad\qquad\qquad \displaystyle+\frac{1}{n-1}\cdot
\frac{2^{n-1}(n-1)!}{(2n-2)!}({(2n-3)!!}+{(2n-1)!!}+2^n\cdot (n-1)!)\Big]\\
\quad\displaystyle =2^{n-2}(n+1)(n-1)!+ (2n-1)!!
\end{array}
$$
and finally
$$
\begin{array}{rl}
U_{n,2}&=X_n-Y_n=2^{n-2}(n+1)(n-1)!-(2n-1)!!\\
&= (n^2-1)(2n-4)!!-(2n-1)(2n-3)!!
\end{array}
$$
as we claimed.
\end{proof}

\begin{lemma}\label{sum:d}
For every $n\geq 2$,
$$
\sum_{k=1}^{n-1}  k^2d_{k,n}= (4n-1) (2n-3)!! -3  (2n-2)!!. 
$$
\end{lemma}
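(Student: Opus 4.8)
The plan is to reduce this sum to a linear combination of the auxiliary quantities $U_{n,m}$ of Lemma~\ref{lem:U}. First I would substitute the closed expression~\eqref{(d)} for $d_{k,n}$, which turns the left-hand side into
$$
\sum_{k=1}^{n-1} k^2 d_{k,n}=\sum_{k=1}^{n-1}\frac{k^3\,(2n-k-3)!}{(n-k-1)!\,2^{n-k-1}}.
$$
The shape of this summand does not yet match that of $U_{n,m}$, so the decisive move is the index change $j=n-1-k$ (equivalently $k=n-1-j$), under which $k$ ranging from $1$ to $n-1$ corresponds to $j$ ranging from $n-2$ down to $0$, and the summand becomes $(n-1-j)^3(n+j-2)!/(j!\,2^{j})$. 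Thus
$$
\sum_{k=1}^{n-1} k^2 d_{k,n}=\sum_{j=0}^{n-2}(n-1-j)^3\,\frac{(n+j-2)!}{j!\,2^{j}},
$$
and the factor $(n+j-2)!/(j!\,2^{j})$ is exactly the weight appearing in $U_{n,m}$.

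Next I would expand the cubic $(n-1-j)^3=(n-1)^3-3(n-1)^2 j+3(n-1)j^2-j^3$ and split the sum according to the power of $j$. Since $U_{n,m}=\sum_{j=0}^{n-2} j^m (n+j-2)!/(j!\,2^{j})$ by definition, this yields
$$
\sum_{k=1}^{n-1} k^2 d_{k,n}=(n-1)^3U_{n,0}-3(n-1)^2U_{n,1}+3(n-1)U_{n,2}-U_{n,3}.
$$
It then remains only to insert the four closed forms for $U_{n,0},\dots,U_{n,3}$ provided by Lemma~\ref{lem:U} and to collect terms.

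The final step is pure bookkeeping: each $U_{n,m}$ is an $\RR$-combination of $(2n-4)!!$ and $(2n-3)!!$, so after substitution the whole expression has the form $A(n)(2n-4)!!+B(n)(2n-3)!!$ for two polynomials $A,B$. I expect the main (though still routine) obstacle to be checking that the a priori degree-three polynomial $A(n)$ collapses. Collecting the coefficient of $(2n-4)!!$ gives $(n-1)^3-3(n-1)^3+3(n-1)(n^2-1)-(n^3+3n^2-3n-1)=-6(n-1)$, and using $(2n-4)!!=(2n-2)!!/\bigl(2(n-1)\bigr)$ this contributes exactly $-3(2n-2)!!$. Collecting the coefficient of $(2n-3)!!$ gives $3(n-1)^2-3(n-1)(2n-1)+(3n^2+n-1)=4n-1$, producing the term $(4n-1)(2n-3)!!$. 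Adding the two contributions yields the claimed formula.
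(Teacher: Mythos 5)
Your proof is correct and follows essentially the same route as the paper's: substituting the closed form for $d_{k,n}$, reversing the summation index to match the weight in $U_{n,m}$, expanding the cubic to get $(n-1)^3U_{n,0}-3(n-1)^2U_{n,1}+3(n-1)U_{n,2}-U_{n,3}$, and substituting the closed forms from Lemma~\ref{lem:U}. Your explicit verification that the coefficient of $(2n-4)!!$ collapses to $-6(n-1)$ and that of $(2n-3)!!$ to $4n-1$ is exactly the bookkeeping the paper leaves implicit.
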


\begin{proof}
By equation (\ref{(d)}),
$$
\begin{array}{l}
\displaystyle \sum_{k=1}^{n-1}  k^2d_{k,n} = \sum_{k=1}^{n-1}  \frac{k^3(2n-k-3)!}{(n-k-1)!2^{n-k-1}}= \sum_{k=0}^{n-2}  \frac{(n-k-1)^3(n+k-2)!}{k!2^{k}}\\
\qquad \displaystyle = (n-1)^3 U_{n,0}-3 (n-1)^2 U_{n,1}+3 (n-1) U_{n,2}-U_{n,3}\\
\qquad =(n-1)^3 (2n-4)!!-3 (n-1)^2\big( (n-1)(2n-4)!!- (2n-3)!!\big)\\
\qquad \qquad
+3 (n-1)\big( (n^2-1)(2n-4)!!-(2n-1)(2n-3)!!\big)\\
\qquad\qquad -\big((n^3+3n^2-3n-1)(2n-4)!! -(3n^2+n-1) (2n-3)!!\big)\\
\qquad =(4n-1)(2n-3)!!-3(2n-2)(2n-4)!!.
\end{array}
$$
\end{proof}

\begin{lemma}\label{sum:f}
For every $n\geq 2$,
$$
 \sum_{k=1}^{n-2} k^2 f_{k,n} = \frac{1}{3} (4n+1) (2n-3)!!-\frac{3}{2} (2n-2)!!.
$$
\end{lemma}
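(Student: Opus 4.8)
The plan is to avoid the hypergeometric machinery of Lemma~\ref{lem:f} and instead exploit the combinatorial parallel between $f_{k,n}$ and $d_{k,n}$. Concretely, I would start not from the $_3F_2$ expression but from the intermediate double-sum formula obtained in the proof of Lemma~\ref{lem:f},
\[
f_{k,n}=\frac{(n-2)!\,k}{2^{n-k-2}}\sum_{m=0}^{n-k-2}\frac{4^m(2n-2m-k-5)!}{(n-m-2)!\,(n-m-k-2)!}.
\]
Multiplying by $k^2$, summing over $k$, and reindexing the inner sum by $N=n-m-2$ (so that $N$ runs from $k$ to $n-2$) turns $\sum_{k=1}^{n-2}k^2f_{k,n}$ into a double sum over $1\le k\le N\le n-2$. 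After swapping the order of summation it becomes
\[
\sum_{k=1}^{n-2}k^2f_{k,n}=(n-2)!\sum_{N=1}^{n-2}\frac{2^{n-2N-2}}{N!}\sum_{k=1}^{N}\frac{k^3\,2^{k}\,(2N-k-1)!}{(N-k)!}.
\]

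The key observation is that the inner sum is exactly $2^{N}\sum_{k=1}^{N}k^2 d_{k,N+1}$, since the explicit formula~(\ref{(d)}) gives $d_{k,N+1}=\tfrac{(2N-k-1)!\,k}{(N-k)!\,2^{N-k}}$. Thus I can invoke Lemma~\ref{sum:d} with $N+1$ in place of $n$ to replace the inner sum by the closed form $2^{N}\big[(4N+3)(2N-1)!!-3(2N)!!\big]$. Substituting $(2N-1)!!=(2N)!/(2^{N}N!)$ and $(2N)!!=2^{N}N!$, the double factorials collapse and, using $2^{-2N}\binom{2N}{N}=(2N-1)!!/(2N)!!$, the whole expression reduces to the single sum
\[
\sum_{k=1}^{n-2}k^2f_{k,n}=(n-2)!\,2^{n-2}\Big[\sum_{N=1}^{n-2}(4N+3)\frac{(2N-1)!!}{(2N)!!}-3(n-2)\Big].
\]

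It remains to evaluate $\Sigma:=\sum_{N=1}^{n-2}(4N+3)\frac{(2N-1)!!}{(2N)!!}$, and this is where the main work lies. I would evaluate it by telescoping: writing $c_N=(2N-1)!!/(2N)!!$ and seeking an antidifference of the form $P(N)c_N$ with $P$ quadratic, the relation $c_{N+1}=\tfrac{2N+1}{2N+2}c_N$ forces $P(N)=\tfrac{2}{3}N(4N+5)$, so that $(4N+3)c_N=P(N+1)c_{N+1}-P(N)c_N$ and hence $\Sigma=\tfrac{2}{3}(n-1)(4n+1)\tfrac{(2n-3)!!}{(2n-2)!!}-3$. (Equivalently, one can combine the classical identity $\sum_{N=0}^{M}c_N=(2M+1)c_M$ with its weighted analogue.) Plugging this back, absorbing $(n-2)!\,2^{n-2}=(2n-4)!!$ and using $(n-1)(2n-4)!!=\tfrac12(2n-2)!!$, the factors $(2n-2)!!$ cancel in the first term and one is left precisely with $\tfrac13(4n+1)(2n-3)!!-\tfrac32(2n-2)!!$. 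The hard part is purely the bookkeeping: getting the reindexing $N=n-m-2$ and the order-swap right, and guessing the correct antidifference $P$; once Lemma~\ref{sum:d} is brought in, no new hypergeometric summation is needed.
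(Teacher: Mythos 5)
Your proposal is correct: I checked the reindexing $N=n-m-2$, the identification of the inner sum with $2^{N}\sum_{k=1}^{N}k^{2}d_{k,N+1}$ via formula~(\ref{(d)}), the reduction to $(n-2)!\,2^{n-2}\bigl[\Sigma-3(n-2)\bigr]$, the antidifference $P(N)=\tfrac{2}{3}N(4N+5)$ (indeed $P(N+1)\tfrac{2N+1}{2N+2}-P(N)=4N+3$), and the final assembly using $(n-2)!\,2^{n-2}=(2n-4)!!$ and $(n-1)(2n-4)!!=\tfrac12(2n-2)!!$; all steps go through and yield the stated closed form. Your route shares the paper's two essential ingredients --- everything pivots on Lemma~\ref{sum:d}, and both arguments end by summing what is, up to an index shift, the same series of double-factorial ratios --- but the organization is genuinely different and cleaner. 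The paper splits off the $m$-boundary terms, telescopes the resulting double sum $S_n'$ \emph{in the parameter $n$} (computing the differences $S_{p+1}'-S_p'$, which only after some delicate reindexing reveal the $d$-sum), and then evaluates $\sum_{p=3}^{n-1}(4p-1)\tfrac{(2p-3)!!}{(2p-2)!!}$ by invoking Gosper's algorithm. You instead swap the order of summation once, at which point the $d$-sum appears immediately and exactly (with parameter $N+1$), and you evaluate the residual series $\sum_{N=1}^{n-2}(4N+3)\tfrac{(2N-1)!!}{(2N)!!}$ by exhibiting an explicit antidifference rather than citing Gosper as a black box --- which is in effect doing Gosper's algorithm by hand, but makes the proof self-contained and free of boundary-term bookkeeping. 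What you lose is only that the paper's telescoping-in-$n$ scheme reuses machinery it already set up for other lemmas; what you gain is a shorter, more transparent derivation in which the role of Lemma~\ref{sum:d} is structural rather than buried inside a difference computation.
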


\begin{proof}
To simplify the notations, set $S_n=\sum\limits_{k=1}^{n-2} k^2 f_{k,n}$.
As we have seen in the proof of Lemma \ref{lem:f},
$$
f_{k,n}=\frac{(n-2)!k}{2^{n-k-2}}\sum_{m=0}^{n-k-2} \frac{4^m (2n-2m-k-5)!}{(n-m-2)!(n-m-k-2)!}
$$
and therefore
$$
\begin{array}{rl}
S_n &\displaystyle= \frac{(n-2)!}{2^{n-2}} \sum_{k=1}^{n-2} 2^k k^3 \sum_{m=0}^{n-k-2} \frac{4^m (2n-k-2m-5)!}{(n-k-2)! (n-k-m-2)!} \\ & \displaystyle =
\frac{(n-2)!}{2^{n-2}} \sum_{k=1}^{n-2} 2^k k^3 \sum_{m=0}^{n-k-2} \frac{4^{n-k-2-m} (k+2m-1)!}{(k+m)! m!} \\ & \displaystyle =
(n-2)! 2^{n-2} \sum_{k=1}^{n-2} \frac{k^3}{2^k} \left(\frac{1}{k}+\sum_{m=1}^{n-k-2} \frac{1}{4^m m}\binom{k+2m-1}{k+m} \right) \\ & \displaystyle =
(n-2)! 2^{n-2} \left(6-\frac{n^2+2}{2^{n-2}}+ \sum_{k=1}^{n-2}\frac{k^3}{2^k}\sum_{m=1}^{n-k-2} \frac{1}{4^m m}\binom{k+2m-1}{k+m} \right)
\end{array}
$$
Set now 
$$
\displaystyle S_n'= \sum_{k=1}^{n-2}\frac{k^3}{2^k}\sum_{m=1}^{n-k-2} \frac{1}{4^m m}\binom{k+2m-1}{k+m}=\sum_{k=1}^{n-3}\frac{k^3}{2^k}\sum_{m=1}^{n-k-2} \frac{1}{4^m m}\binom{k+2m-1}{k+m}
$$
Since $S_3'=0$, we have that
$$
S_n'  =\sum_{p=3}^{n-1} (S_{p+1}'-S_p')
$$
and
$$
\begin{array}{l}
\displaystyle S_{p+1}'-S_p'=\frac{(p-2)^3}{2^p}+\sum_{k=1}^{p-3} \frac{k^3}{2^k(p-k-1)4^{p-k-1}}\binom{2p-k-3}{p-1}\\
\qquad \displaystyle
=\frac{(p-2)^3}{2^p}+\frac{1}{2^{2p-2}} \sum_{k=1}^{p-3} \frac{k^3(2p-k-3)!}{2^{-k}(p-k-1) (p-1)!(p-k-2)!}\\
\qquad \displaystyle
=\frac{(p-2)^3}{2^p}+\frac{1}{2^{2p-2}(p-1)!} \sum_{k=1}^{p-3} \frac{k^3(2p-k-3)!}{2^{-k}(p-k-1)!}\\
\qquad \displaystyle
=\frac{(p-2)^3}{2^p}+\frac{1}{2^{2p-2}(p-1)!} \sum_{k=1}^{p-3} \frac{(p-k-2)^3(p+k-1)!}{2^{k-p+2}(k+1)!}\\
\end{array}
$$
$$
\begin{array}{l}
\qquad \displaystyle
=\frac{(p-2)^3}{2^p}+\frac{1}{2^{p-1}(p-1)!} \sum_{k=2}^{p-2} \frac{(p-k-1)^3(p+k-2)!}{2^{k}k!}\\
\qquad \displaystyle
=\frac{(p-2)^3}{2^p}+\frac{1}{2^{p-1}(p-1)!} \Big[\sum_{k=0}^{p-2} \frac{(p-k-1)^3(p+k-2)!}{2^{k}k!}\\
\qquad\qquad\qquad \displaystyle-(p-1)^3(p-2)!-\frac{1}{2}(p-2)^3(p-1)!\Big]
\\
\qquad \displaystyle
=-\frac{(p-1)^2}{2^{p-1}}+\frac{1}{2^{p-1}(p-1)!} \sum_{k=0}^{p-2} \frac{(p-k-1)^3(p+k-2)!}{2^{k}k!}
\\
\qquad \displaystyle
=-\frac{(p-1)^2}{2^{p-1}}+\frac{1}{(2p-2)!!}\big((4p-1)(2p-3)!!-3(2p-2)!!\big)\quad\mbox{(by Lemma \ref{sum:d})}
\\
\qquad \displaystyle
=-\frac{(p-1)^2}{2^{p-1}}+(4p-1)\frac{(2p-3)!!}{(2p-2)!!}-3
 \end{array}
$$

Therefore
$$
S_n'  =\sum_{p=3}^{n-1}\Big((4p-1)\frac{(2p-3)!!}{(2p-2)!!}-\frac{(p-1)^2}{2^{p-1}}-3\Big)
$$
Now, applying \emph{Gosper's algorithm}   \cite[p. 77]{AeqB} we have that
$$
\sum_{p=3}^{n-1}  (4p-1)\frac{(2p-3)!!}{(2p-2)!!}=\frac{1}{3\cdot 2^{2n+1}}\Big(32(4n^2-3n-1)\binom{2n-3}{n-1}-39\cdot 2^{2n}\Big)
$$
and then
$$
\begin{array}{rl}
S_n'   & \displaystyle = 
\frac{1}{3\cdot 2^{2n+1}}\Big(32(4n^2-3n-1)\binom{2n-3}{n-1}-39\cdot 2^{2n}\Big)\\
& \displaystyle \qquad\qquad -\frac{11\cdot 2^n-8(n^2+2)}{2^{n+1}}
-3(n-3)\\
& \displaystyle = \frac{n^2+2}{2^{n-2}} -3 (n+1) + \frac{(4n+1) (2n-3)!!}{3 (2n-4)!!}.
\end{array}
$$
Finally,  
$$
\begin{array}{rl}
S_n & \displaystyle = (n-2)! 2^{n-2} \left(6-\frac{n^2+2}{2^{n-2}}+S_n'\right)\\
& \displaystyle= -3 (n-1)! 2^{n-2} + \frac{(4n+1)(2n-3)!!}{3}\\
& \displaystyle  =\frac{1}{3} (4n+1) (2n-3)!!-\frac{3}{2} (2n-2)!!.
\end{array}
$$
\end{proof}

Summarizing,  by Lemmas \ref{lem:9}, \ref{sum:d}, and \ref{sum:f}, we have that
$$
\begin{array}{l}
\displaystyle E_U(\oFd_n)=\frac{1}{(2n-3)!!}\Big(n\sum_{k=1}^{n-1} k^2\cdot d_{k,n}+\binom{n}{2}\sum_{k=1}^{n-2} k^2\cdot f_{k,n}\Big)\\
\qquad \displaystyle =\frac{1}{(2n-3)!!}\Big[n((4n-1) (2n-3)!! -3  (2n-2)!!)\\
\qquad \displaystyle\qquad\qquad +\binom{n}{2}\Big(\frac{1}{3} (4n+1) (2n-3)!!-\frac{3}{2} (2n-2)!!\Big)\Big]\\
\qquad \displaystyle = \frac{1}{6} n (4 n^2+21 n-7)-\frac{3n(n+3)}{4}\cdot \frac{(2n-2)!!}{(2n-3)!!}
\end{array}
$$
as we claimed.
%\subsection*{Proof of Theorem \ref{th:unif}}
%
%Using the previous expression for $E_U(\overline{\Phi}_n^{(2)}) $ and Proposition~\ref{prop:E}, we can compute $E_U(D_n^2)$:
%
%\begin{eqnarray*}
%E_U(D_n^2) &=& 2E(\oFd_n)-2nE(\delta_n)^2- n(n-1)E(\varphi_n)^2 \\
%&=& \frac{1}{6} n (4 n^2+21 n-7)-\frac{3\cdot 2^{n-3} (n+3) n!}{(2n-3)!!}\\&& -n(n-1)\left( \frac{1}{2}\left(\frac{(2n-2)!!}{(2n-3)!!}-2\right)\right)^2 -2n\left(\frac{(2n-2)!!}{(2n-3)!!}\right)^2 \\
%&=& \left(\frac{n}{12((2n-3)!!)^2}\right) ( (2n-3)!!^2 (8 n^2+30n-2-24 \cdot (2n-3)!!^2 \\ && +48 (2n-2)!) -3 (n-1) (2n-2)!!^2\\ && +3 (2n-2)! (n-13-8 (2n-2)!))
%\end{eqnarray*}

\end{document}